\theoremstyle{plain}
\newtheorem{theorem}{Theorem}[section]
\newtheorem{proposition}[theorem]{Proposition}
\newtheorem{corollary}[theorem]{Corollary}
\theoremstyle{definition}
\newtheorem{definition}[theorem]{Definition}
\newtheorem{remark}[theorem]{Remark}
\DeclareMathOperator{\BSub}{BSub}
\DeclareMathOperator{\Aut}{Aut}
\newcommand{\mc}{\mathcal}
\newcommand{\ul}{\underline}
\title{Topos quantum theory with short posets}
\author{John Harding}
\address{New Mexico State University}
\email{hardingj@nmsu.edu}
\author{Chris Heunen}
\address{University of Edinburgh}
\email{chris.heunen@ed.ac.uk}
\begin{document}

\begin{abstract}
  Topos quantum mechanics, developed by Isham {\em et.\ al.}~\cite{Measures,Doring,Main,Flori1,Flori2,Isham,Isham2,Isham3,Isham4}, creates a topos of presheaves over the poset $\mc{V}(\mc{N})$ of Abelian von Neumann subalgebras of the von Neumann algebra $\mc{N}$ of bounded operators associated to a physical system, and established several results, including: (a) a connection between the Kochen-Specker theorem and the non-existence of a global section of the spectral presheaf; (b) a version of the spectral theorem for self-adjoint operators; (c) a connection between states of $\mc{N}$ and measures on the spectral presheaf; and (d) a model of dynamics in terms of $\mc{V}(\mc{N})$. We consider a modification to this approach using not the whole of the poset $\mc{V}(\mc{N})$, but only its elements $\mc{V}(\mc{N})^*$ of height at most two. This produces a different topos with different internal logic. However, the core results (a)--(d) established using the full poset $\mc{V}(\mc{N})$ are also established for the topos over the smaller poset, and some aspects simplify considerably. Additionally, this smaller poset has appealing aspects reminiscent of projective geometry. 
  \\[\baselineskip]
  \textbf{Keywords:} Topos, von Neumann algebra, internal logic, presheaf, daseinisation, automorphism
\end{abstract}

\maketitle

\section{Introduction}

Isham and Butterfield \cite{Isham,Isham2,Isham3,Isham4} introduced a topos approach to quantum mechanics and showed that the Kochen-Specker theorem is equivalent to the non-existence of a global section of a certain presheaf. This program was further developed by Butterfield, D\"oring, de Groote, Flori, Hamilton, and Isham \cite{Measures,Doring3,Doring4,Doring2,Main,Main2,Flori1,Flori2}; and a related topos approach was introduced by Heunen, Landsman, and Spitters \cite{HLS,HLS2,HLS3,Heunen}. In \cite{Flori,Wolters} the approaches are compared, and the books \cite{Flori1,Flori2} are devoted to the subject. 

In this approach, a von Neumann algebra $\mathcal{N}$ is associated to a quantum system as in standard quantum mechanics. Then $\mathcal{V}(\mathcal{N})$ is the poset of unital abelian von Neumann subalgebras of $\mathcal{N}$. The fundamental object is the topos of presheaves of sets over $\mathcal{V}(\mathcal{N})$. In particular, this is applied in the case when $\mathcal{N}$ is the von Neumann algebra $\mathcal{B}(\mathcal{H})$ of bounded operators on a Hilbert space $\mathcal{H}$, and in this case $\mathcal{V}(\mathcal{N})$ is denoted $\mathcal{V}(\mathcal{H})$. 

The spirit of this topos approach is that elements of $\mathcal{V}(\mathcal{N})$ give classical ``snapshots'' of the quantum system. These classical snapshots are then glued together to form the various presheaves used in the topos approach. It is the purpose of this note to show that the primary results of the topos approach \cite{Main} are retained when one considers the topos of presheaves over the poset $\mathcal{V}(\mathcal{N})^*$ of elements of $\mathcal{V}(\mathcal{N})$ of height at most two. In rough terms, rather than use all classical snapshots, classical ``glimpses'' of the system suffice. 

There are several implications of this line of study. Clearly the poset $\mathcal{V}(\mathcal{N})^*$ is far simpler than $\mathcal{V}(\mathcal{N})$, and this simplification is reflected in aspects of the development of the quantum approach over $\mc{V}(\mc{N})^*$. For instance, the spectral presheaf associates spectra to each $V\in\mc{V}(\mc{N})$. In general, these are compact Hausdorff spaces. However, the spectral presheaf over $\mc{V}(\mc{N})^*$ associates spectra only to those $V\in\mc{V}(\mc{N})^*$, and these are discrete spaces with at most 3 elements. So clopen subobjects of the spectral presheaf over $\mc{V}(\mc{N})^*$ are simply the subobjects. In effect, the topology plays no role. The spirit of $\mc{V}(\mc{N})$ is to take classical snapshots of the system, the spirit of $\mc{V}(\mc{N})^*$ is to take small finite-dimensional classical snapshots. 

Considering the topos approach in the setting of $\mc{V}(\mc{N})^*$ also provides an alternate viewpoint from which one can assess the topos approach more broadly. The same core theorems are established in the topos over $\mc{V}(\mc{N})$ and that over $\mc{V}(\mc{N})^*$, and they are established by similar methods. Yet the two toposes are not the same, and their logics have different equational properties with the logic of the topos over $\mc{V}(\mc{N})^*$ being stronger than that of $\mc{V}(\mc{N})$. This leads one to ask what is the role of this logic in terms of the quantum system being represented. 

In connection with the above point, there are differences between the two approaches. In \cite{Measures} a bijective correspondence was given between the states of $\mc{N}$ and the finitely additive measures on $\mc{V}(\mc{N})$. Such measures were described as certain presheaves on the clopen subobjects of the spectral presheaf. This carries over to the $\mc{V}(\mc{N})^*$ setting. However, a further characterization of the normal states was given in terms of those measures that satisfied a certain additional ``local'' property, and this does not carry to the $\mc{V}(\mc{N})^*$ setting. It is not now known whether there are any results in the $\mc{V}(\mc{N})$ setting expressible in ``global'' terms that do not carry to the $\mc{V}(\mc{N})^*$ setting. 

The topos approach in the $\mc{V}(\mc{N})^*$ setting has some additional advantages. The poset $\mc{V}(\mc{N})^*$ is not only much smaller than $\mc{V}(\mc{N})$, but it is tractable to work with. In \cite{HHLN,BertJohn} it was shown that $\mc{V}(\mc{N})^*$ can be treated graphically and has similarities reminiscent of projective geometry. This extends to a treatment of morphisms between the projections of von Neumann algebras $\mc{N}$ and $\mc{M}$ that become certain order preserving maps between the posets $\mc{V}(\mc{N})^*$ and $\mc{V}(\mc{M})^*$, and therefore certain geometric morphisms between their toposes. This may be of interest in the larger program of studying multiple systems indicated in \cite{Main}. 

This note is arranged in the following way. Section~\ref{sec:review} briefly reviews the major features of the topos approach. We use the same notation and terminology as in~\cite{Main}, and this section is provided to make the paper more easily read. Section~\ref{sec:shorttopos} shows how the major results of~\cite{Main} carry over to the setting of $\mathcal{V}(\mathcal{N})^*$. This amounts largely to isolating key features from the more complex setting of $\mathcal{V}(\mathcal{N})$. 
Section~\ref{sec:automorphisms} establishes that automorphisms of $\mathcal{V}(\mathcal{N})^*$ behave exactly the same as automorphisms of $\mathcal{V}(\mathcal{N})$. 
Finally, Section~\ref{sec:conclusion} contains some concluding remarks, including a discussion of the role of von Neumann algebras as opposed to the more general C*-algebras~\cite{HLS}. 

\section{A review of the topos approach}\label{sec:review}

A detailed account of the topos approach of Isham, Butterfield and D\"oring is given in \cite{Main}. We follow that with the same notation in this note. We comment that \cite{Main} is largely similar to \cite{Main2}, but we follow \cite{Main} as it may be more easily obtained. We briefly review the outline here, but for details the reader should consult \cite{Main}. 

Let $\mc{N}$ be a von Neumann algebra and $\mc{V}(\mc{N})$ be the poset of its abelian von Neumann subalgebras. The paper \cite{Main} works primarily in the setting of the von Neumann algebra 
$\mc{B}(\mc{H})$ of bounded operators of a Hilbert space $\mc{H}$, and denotes $\mc{V}(\mc{B}(\mc{H}))$ by $\mc{V}(\mc{H})$. The topos of presheaves over $\mc{V}(\mc{N})$ is the fundamental setting. The key in defining the pertinent presheaves is the notion of {\em daseinisation} \cite[p.56]{Main}. For each $V\in\mc{V}(\mc{N})$ let $\mc{P}(V)$ be the complete Boolean algebra of projections of $V$. Daseinisation $\delta$ associates to any projection $\hat{P}$ of $\mc{N}$ and any $V\in\mc{V}(\mc{N})$ the smallest projector in $\mc{P}(V)$ lying above $\hat{P}$. 
\begin{equation} \label{daseinisation}
\delta(\hat{P})_V=\bigwedge\{\hat{\alpha}\in\mc{P}(V):\hat{P}\leq\hat{\alpha}\}
\end{equation}

The {\em outer presheaf} or {\em coarse graining presheaf} \cite[p.57]{Main} $\ul{O}$ is defined on objects $V\in\mc{V}(\mc{N})$ and morphisms $i_{V'V}$ for $V'\subseteq V$ as follows:
\begin{equation} \label{outer}
\ul{O}_V = \mc{P}(V)\quad \mbox{ and }\quad
\ul{O}(i_{V'V})\mbox{ takes $\hat{P}\in\mc{P}(V)$ to $\delta(\hat{P})_{V'}\in\mc{P}(V')$}
\end{equation}

The spectrum of an abelian von Neumann algebra $V$ is the collection of multiplicative linear functionals $\lambda:V\to\mathbb{C}$. It forms a compact Hausdorff space. The {\em spectral presheaf} \cite[p.62]{Main} $\ul{\Sigma}$ is as follows: 
\begin{equation} \label{spectral}
\ul{\Sigma}_V = \mbox{ spectrum of }V\quad \mbox{ and } \quad
\ul{\Sigma}(i_{V'V}) \mbox{ takes $\lambda$ to its restriction $\lambda|_{V'}$}
\end{equation}
Any projection $\hat{P}\in\mc{N}$ gives a global element $\delta(\hat{P})$ of $\ul{O}$ \cite[p.57]{Main}. In contrast, in the case of $\mc{B}(\mc{H})$ where dim $\mc{H} > 2$ we have the following. 

\begin{theorem}(\cite[p.62]{Main})\label{Kochen-Specker}
The Kochen-Specker theorem for a Hilbert space $\mathcal{H}$ is equivalent to the spectral presheaf $\ul{\Sigma}$ of $\mc{V}(\mc{H})$ having no global sections. 
\end{theorem}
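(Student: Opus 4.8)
The plan is to prove the stated biconditional by exhibiting a bijective correspondence between global sections of $\ul{\Sigma}$ and Kochen--Specker valuations on $\mc{B}(\mc{H})$. Since the Kochen--Specker theorem is precisely the assertion that no such valuation exists (when $\dim\mc{H}>2$), once this correspondence is in place the equivalence ``$\ul{\Sigma}$ has no global section $\iff$ no valuation exists'' is immediate. So the whole content is to translate between the geometric notion of a coherent family of spectral points and the combinatorial notion of a noncontextual value assignment.

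First I would unwind what a global section of $\ul{\Sigma}$ is. It is a natural transformation from the terminal presheaf into $\ul{\Sigma}$, which amounts to a choice of spectral point $\lambda_V\in\ul{\Sigma}_V$ for every $V\in\mc{V}(\mc{H})$, subject to the naturality constraint imposed by the restriction maps~\eqref{spectral}: whenever $V'\subseteq V$ one has $\ul{\Sigma}(i_{V'V})(\lambda_V)=\lambda_{V'}$, i.e.\ $\lambda_V|_{V'}=\lambda_{V'}$. Thus a global section is exactly a coherent family of characters, one per abelian subalgebra, which agree on common subalgebras.

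Next I would pass from characters to projections. For an abelian von Neumann algebra $V$, each character $\lambda_V$ sends a projection $\hat{P}\in\mc{P}(V)$ to $\lambda_V(\hat{P})=\lambda_V(\hat{P})^2\in\{0,1\}$ and restricts to a Boolean homomorphism $\mc{P}(V)\to\{0,1\}$; conversely, since $V$ is the norm-closure of the span of its projections, such a homomorphism extends uniquely to a character (this is Stone/Gelfand duality: characters of $V$ are the points of the Stone space of the complete Boolean algebra $\mc{P}(V)$). Under this identification a global section becomes a family of Boolean homomorphisms $\beta_V\colon\mc{P}(V)\to\{0,1\}$ agreeing whenever one algebra contains another. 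Because every projection $\hat{P}$ of $\mc{B}(\mc{H})$ generates an abelian subalgebra contained in any $V$ holding it, the compatibility condition forces the $\beta_V$ to glue into a single function $\beta$ on all projections whose value at $\hat{P}$ is independent of the commuting context $V\ni\hat{P}$, and whose restriction to each $\mc{P}(V)$ is a homomorphism. For any orthogonal family summing to the identity inside some $V$, the homomorphism property of $\beta_V$ then assigns value $1$ to exactly one member. This is precisely a noncontextual $\{0,1\}$-valuation, i.e.\ a Kochen--Specker coloring; running the construction backwards recovers a compatible family of characters from any such coloring, giving a global section.

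I expect the main obstacle to be making the local dictionary airtight rather than the subsequent gluing, which is formal. Specifically, one must verify that a $\{0,1\}$-valued Boolean homomorphism on $\mc{P}(V)$ genuinely extends to a multiplicative linear functional on all of $V$, and that the functional relations among commuting observables appearing in the usual statement of the Kochen--Specker theorem correspond exactly to the naturality squares of $\ul{\Sigma}$ (so that noncontextuality of the valuation is the same as coherence of the family). With that identification settled, global sections exist if and only if valuations do, and the equivalence follows.
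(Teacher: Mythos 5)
Your proposal is correct, and it is essentially the argument the paper itself relies on: the paper states Theorem~\ref{Kochen-Specker} without proof (it is quoted from \cite{Main}), but its proof of the restricted analogue, Theorem~\ref{ShortKochen-Specker}, proceeds exactly as you do --- identify each spectrum with the Stone space of the projection lattice $\mc{P}(V)$, turn a global section into a compatible family of Boolean homomorphisms $\mc{P}(V)\to 2$, glue these into a single finitely additive $\{0,1\}$-valued measure on the projections, and reverse the construction by restricting such a measure to each context. The one point to tighten is your claim that the resulting valuation assigns $1$ to exactly one member of \emph{any} orthogonal family summing to the identity: a Boolean homomorphism is only finitely additive, so for an \emph{infinite} resolution of the identity an ultrafilter-type section can assign $0$ to every member; this is harmless here because the Kochen--Specker theorem in the form used by the paper is precisely the non-existence of a \emph{finitely} additive $\{0,1\}$-valued measure on $\mc{P}(\mc{H})$, so your dictionary should be, and need only be, stated for finite orthogonal families.
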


The ``values object'' for the topos approach is not completely settled, and a number of related presheaves play a role. To begin, there is the constant presheaf $\ul{\mathbb{R}}$ whose constant value is the ordinary real numbers. This is the internal real numbers object. The presheaves that play a larger role are variants of $\ul{\mathbb{R}^\preceq}$  \cite[Defns.~8.1--8.3]{Main}. For $V\in\mathcal{V}(\mathcal{N})$ let ${\downarrow}V$ be the principle downset of $V$ and $\mc{OP}({\downarrow}V,\mathbb{R})$ be the set of order preserving functions $\alpha:{\downarrow}V\to\mathbb{R}$. Then 
\begin{equation} \label{OP}
\ul{\mathbb{R}^\preceq}_V = \mc{OP}({\downarrow}V,\mathbb{R})\quad \mbox{ and } \quad
\ul{\mathbb{R}^\preceq}(i_{V'V}) \mbox{ takes $\alpha$ to its restriction $\alpha |_{{\downarrow}V'}$}
\end{equation}
Then $\ul{\mathbb{R}^\succeq}$ is defined similarly through order inverting functions \cite[Defn.~8.2]{Main} and $\ul{\mathbb{R}^\leftrightarrow}$ \cite[Defn.~8.3]{Main} where $\ul{\mathbb{R}^\leftrightarrow}_V$ is all ordered pairs $(\alpha,\beta)$ of an order preserving function $\alpha$ and order inverting function $\beta$ from ${\downarrow} V$ to $\mathbb{R}$ with $\alpha\leq \beta$ pointwise. This is also known as the interval domain~\cite[Sec.~1.6]{HLS}.

\begin{theorem}\label{SpecThm}(\cite[Thm.~8.2]{Main}
For a self adjoint element $\hat{A}$ of $\mathcal{V}$, there is a natural transformation $\check{\delta}(\hat{A})$ from the spectral presheaf $\ul{\Sigma}$ to the value presheaf $\ul{\mathbb{R}^\leftrightarrow}$. 
\end{theorem}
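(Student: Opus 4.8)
The plan is to realise $\check\delta(\hat A)$ as the \emph{daseinisation of the self-adjoint operator} $\hat A$, building on the projection daseinisation $\delta$ of \eqref{daseinisation}. First I would extend $\delta$ from projections to an arbitrary self-adjoint $\hat A\in\mc N$ by daseinising its spectral family. Writing $\{E^{\hat A}_r\}_{r\in\mathbb{R}}$ for the spectral resolution of $\hat A$, I would define, for each $V\in\mc V(\mc N)$, an \emph{outer} approximation $\delta^o(\hat A)_V$ and an \emph{inner} approximation $\delta^i(\hat A)_V$ as the self-adjoint elements of $V$ whose spectral families are obtained by daseinising each projection $E^{\hat A}_r$ into $V$ (inwards for $\delta^o$, outwards for $\delta^i$, the directions swapping because a larger spectral family names a smaller operator). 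The properties I would record are: both $\delta^i(\hat A)_V$ and $\delta^o(\hat A)_V$ lie in $V$; they bracket $\hat A$ in the spectral order, and hence in the usual order; and they are monotone in $V$, namely $V''\subseteq V'$ forces $\delta^i(\hat A)_{V''}\le\delta^i(\hat A)_{V'}$ and $\delta^o(\hat A)_{V''}\ge\delta^o(\hat A)_{V'}$.

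With these in hand I would define the component of $\check\delta(\hat A)$ at $V$ as follows. A point $\lambda\in\ul{\Sigma}_V$ is a multiplicative functional on $V$, and for every $V'\in{\downarrow}V$ both $\delta^i(\hat A)_{V'}$ and $\delta^o(\hat A)_{V'}$ lie in $V'\subseteq V$, so $\lambda$ evaluates on them. Set
\[
\alpha_\lambda(V')=\lambda\bigl(\delta^i(\hat A)_{V'}\bigr),\qquad
\beta_\lambda(V')=\lambda\bigl(\delta^o(\hat A)_{V'}\bigr),
\]
and let $\check\delta(\hat A)_V(\lambda)=(\alpha_\lambda,\beta_\lambda)$. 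Because $\lambda$ is a character, hence positive, it is monotone on self-adjoint elements, so the monotonicity of the daseinisations transfers directly: $\alpha_\lambda$ is order preserving and $\beta_\lambda$ is order inverting on ${\downarrow}V$. Moreover $\delta^i(\hat A)_{V'}\le\delta^o(\hat A)_{V'}$ gives $\alpha_\lambda\le\beta_\lambda$ pointwise, so $(\alpha_\lambda,\beta_\lambda)\in\ul{\mathbb{R}^\leftrightarrow}_V$ as required.

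It remains to check naturality, i.e.\ that for $V'\subseteq V$ the square with the restriction maps $\ul{\Sigma}(i_{V'V})$ and $\ul{\mathbb{R}^\leftrightarrow}(i_{V'V})$ commutes. This I expect to be essentially formal: restricting $(\alpha_\lambda,\beta_\lambda)$ to ${\downarrow}V'$ and comparing with the pair produced by $\lambda|_{V'}$, the two agree because for any $V''\subseteq V'$ the element $\delta^i(\hat A)_{V''}$ already lies in $V''\subseteq V'$, so $(\lambda|_{V'})(\delta^i(\hat A)_{V''})=\lambda(\delta^i(\hat A)_{V''})$, and likewise for $\delta^o$. Thus the two paths around the square yield the same order-preserving/order-inverting pair.

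The main obstacle lies entirely in the first paragraph: making the self-adjoint daseinisation well defined and monotone. The usual order on self-adjoint operators is not a lattice order, so the inner and outer approximations must be organised through the \emph{spectral order}, under which $V_{sa}$ is a complete lattice and the approximations exist as genuine suprema and infima. The technical heart is verifying that daseinising the spectral family projection-by-projection --- via \eqref{daseinisation} and its inner counterpart --- yields a legitimate spectral family (right-continuity and the correct limits at $\pm\infty$) and that the resulting operators are monotone in $V$. Once this spectral-order bookkeeping is in place, the bracketing and monotonicity properties follow, and everything downstream is routine.
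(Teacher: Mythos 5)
Your proposal is correct and takes essentially the same route as the paper: following \cite[Thm.~8.2]{Main}, the paper defines $\check{\delta}(\hat{A})$ via inner/outer daseinisation of $\hat{A}$ in the spectral order (equation (\ref{outerdaseinisation})) and takes as components the maps $\lambda\mapsto(\check{\delta}^i(\hat{A})_V(\lambda),\check{\delta}^o(\hat{A})_V(\lambda))$ obtained by evaluating the daseinised operators at restrictions of $\lambda$ over ${\downarrow}V$ (equation (\ref{SAdaseinisation})), which is exactly your pair $(\alpha_\lambda,\beta_\lambda)$. Your projection-wise daseinisation of the spectral family, with the inner/outer directions swapped and the right-continuity regularisation, is precisely how \cite{Main} realises these spectral-order meets and joins concretely, so the two constructions coincide.
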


This provides an analog of the spectral theorem similar to that in the classical case --- to each observable $\hat{A}$ of the system, there is a function from the states to the values object. This $\check{\delta}$ is defined through an extension of daseinisation to self-adjoint operators. Under the ``spectral order'' $\preceq$ \cite[p.~95]{Main} the self-adjoint operators $\mathcal{N}_{sa}$ of $\mathcal{N}$ form a complete lattice. For a self adjoint $\hat{A}\in\mathcal{N}_{sa}$ the outer daseinisation $\delta^o(\hat{A})$ is defined for $V\in\mathcal{V}(\mathcal{N})$ by
\begin{equation}\label{outerdaseinisation}
\delta^o(\hat{A})_V = \bigwedge\{\hat{B}\in V_{sa}:\hat{A}\preceq\hat{B}\}
\end{equation}
The inner daseinisation $\delta^i(\hat{A})$ is defined similarly via join as the largest in $V_{sa}$ beneath $\hat{A}$. Since each $V\in\mathcal{V}(\mathcal{N})$ is abelian, the self-adjoint operator $\delta^o(\hat{A})_V$ of $V$ provides a map $\overline{\delta^o(\hat{A})_V}:\ul{\Sigma}_V\to\mathbb{R}$ from the spectrum $\ul{\Sigma}_V$ of $V$ to the reals. Then for each $\lambda\in\ul{\Sigma}_V$ there is a map \cite[p.103]{Main} $\check{\delta}^o(\hat{A})_V(\lambda):{\downarrow}V\to\mathbb{R}$ whose value at $V'\subseteq V$ is $\overline{\check{\delta^o}(\hat{A})_{V'}}(\lambda|_{V'})$. With similar considerations for inner daseinisation, the components for the natural transformation $\check{\delta}(\hat{A})$ of Theorem~\ref{SpecThm} are given by \cite[p.~106]{Main}
\begin{equation} \label{SAdaseinisation}
\check{\delta}(\hat{A})_V = (\check{\delta}^i(\hat{A})_V(\,\cdot\,),\check{\delta}^o(\hat{A})_V(\,\cdot\,)):\ul{\Sigma}_V\to\ul{\mathbb{R}^\leftrightarrow}_V
\end{equation}

This is one of the primary results of the topos approach, showing that the observables, state space, and values object behave in a classical way. We discuss one further result from the topos approach. For this, a further notion is required. 

A subobject $\ul{S}$ of the spectral presheaf is {\em clopen} if $\ul{S}_V$ clopen in $\ul{\Sigma}_V$ for each $V$. The standard relation between projections of a commutative von Neumann algebra and clopen subsets of its spectrum provides \cite[p.63]{Main} that each projection $\hat{P}$ of $\mc{N}$ gives a clopen subobject $\ul{\delta(\hat{P})}$ whose value at $V$ is the clopen subset of $\ul{\Sigma}_V$ associated to the projection $\delta(\hat{P})_V$ of $V$. It is shown \cite[p.62]{Main} that the clopen subobjects Sub$_{cl}(\ul{\Sigma})$ of $\ul{\Sigma}$ form a Heyting algebra, and that there is a presheaf $P_{cl}\ul{\Sigma}$ whose global elements are these clopen subobjects. 

\begin{definition} (\cite[p.~7]{Measures})\label{defn:measure}
A measure or valuation $\mu$ on the presheaf $P_{cl}\ul{\Sigma}$ is a mapping $\mu:P_{cl}\ul{\Sigma}\to\Gamma\ul{[0,1]^\succeq}$ with $\mu(\ul{\Sigma})=1$ and $\mu(\ul{S_1}\vee\ul{S_2})+\mu(\ul{S_1}\wedge\ul{S_2})=\mu(\ul{S_1})+\mu(\ul{S_2})$.
\end{definition}

In this definition, the meet and join operations are those of the Heyting algebra $P_{cl}\ul{\Sigma}$ and the addition is the addition of $\Gamma\ul{[0,1]^\succeq}$ given by pointwise sum of increasing and decreasing functions. In \cite{Measures} the following is established. 

\begin{theorem} (\cite[Theorem~IV.1]{Measures}) \label{measures}
For any von Neumann algebra $\mc{N}$ with no direct summand of type $I_2$ there is a bijection between the set of states of $\mc{N}$ and the set of measures on the clopen subobjects of the spectral presheaf $\ul{\Sigma}$ of $\mc{V}(\mc{N})$. 
\end{theorem}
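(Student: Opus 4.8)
The plan is to build explicit maps in both directions, recognizing the passage from measures to states as an instance of a Gleason-type extension theorem, which is exactly where the exclusion of type $I_2$ summands is used. Throughout I would exploit the standard dictionary, recalled before Definition~\ref{defn:measure}, between a clopen subobject $\ul{S}$ of $\ul{\Sigma}$ and the family $(\alpha_{\ul{S}}(V))_{V\in\mc{V}(\mc{N})}$ of projections, where $\alpha_{\ul{S}}(V)\in\mc{P}(V)$ is the projection whose associated clopen subset of $\ul{\Sigma}_V$ is $\ul{S}_V$. The condition that $\ul{S}$ be a subpresheaf becomes $\delta(\alpha_{\ul{S}}(V))_{V'}\leq\alpha_{\ul{S}}(V')$ for $V'\subseteq V$; since daseinisation moves a projection upward, this forces $\alpha_{\ul{S}}(V)\leq\alpha_{\ul{S}}(V')$. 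Because finite meets and joins of subobjects in a presheaf topos are computed pointwise, the Heyting operations of $P_{cl}\ul{\Sigma}$ satisfy $\alpha_{\ul{S_1}\vee\ul{S_2}}(V)=\alpha_{\ul{S_1}}(V)\vee\alpha_{\ul{S_2}}(V)$ and dually for meets, the operations on the right being those of the Boolean algebra $\mc{P}(V)$.

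For the forward direction, given a state $\rho$ of $\mc{N}$ I would set $\mu_\rho(\ul{S})(V)=\rho(\alpha_{\ul{S}}(V))$. The inequality $\alpha_{\ul{S}}(V)\leq\alpha_{\ul{S}}(V')$ for $V'\subseteq V$ shows that $V\mapsto\mu_\rho(\ul{S})(V)$ is order reversing with values in $[0,1]$, hence an element of $\Gamma\ul{[0,1]^\succeq}$; the normalization $\mu_\rho(\ul{\Sigma})=1$ is immediate because $\alpha_{\ul{\Sigma}}(V)$ is the unit of $V$. For the valuation identity I would fix $V$ and note that $\alpha_{\ul{S_1}}(V)$ and $\alpha_{\ul{S_2}}(V)$ commute, so that the inclusion--exclusion $P\vee Q=P+Q-(P\wedge Q)$ for commuting projections, together with linearity of $\rho$, yields $\rho(P\vee Q)+\rho(P\wedge Q)=\rho(P)+\rho(Q)$ at each $V$.

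For the reverse direction, given a measure $\mu$ I would first observe that for each context $V$ the assignment $P\mapsto\mu(\ul{\delta(P)})(V)$ is a finitely additive probability measure on the Boolean algebra $\mc{P}(V)$, using the valuation axiom and normalization together with the computation of meets and joins above. Using that each $\mu(\ul{S})$ is a global section and hence order reversing in $V$, I would check that these local measures are mutually compatible, so that evaluation at the context $V_{\hat P}$ generated by a projection $\hat P$ gives a well-defined function $m(\hat P)=\mu(\ul{\delta(\hat P)})(V_{\hat P})$ on all projections of $\mc{N}$ with $m(\hat 1)=1$ and with $m(\hat P+\hat Q)=m(\hat P)+m(\hat Q)$ for orthogonal $\hat P,\hat Q$, the latter extracted by working inside the single context they generate.

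The decisive step is then to extend $m$ to a genuine state of $\mc{N}$, and this is exactly where the hypothesis of no type $I_2$ summand enters: by the von Neumann algebra form of Gleason's theorem --- the Mackey--Gleason theorem of Bunce and Wright --- every bounded finitely additive measure on the projections of such an algebra is the restriction of a unique bounded linear functional, which here is positive and unital, hence a state $\rho_\mu$. I expect this extension result to be the main obstacle: it is the deep analytic ingredient and the source of the type $I_2$ restriction, while the remaining work is bookkeeping. Finally I would verify that $\rho\mapsto\mu_\rho$ and $\mu\mapsto\rho_\mu$ are mutually inverse, one direction using the uniqueness clause of the extension theorem and the other using that a state is determined by its values on projections, since every self-adjoint element is a norm limit of linear combinations of its spectral projections.
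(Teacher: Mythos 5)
Your proposal is correct and takes essentially the same route as the paper's source proof: this theorem is cited from D\"oring, and the construction the paper reproduces (for itself in Theorem~\ref{measures*}) is exactly yours --- the forward map $\ul{S}\mapsto(\rho(\hat{P}_{\ul{S}_V}))_{V\in\mc{V}(\mc{N})}$, the backward extraction of a finitely additive measure on $\mc{P}(\mc{N})$ by evaluating daseinised subobjects in a context containing the given projection (with additivity checked in the context generated by two orthogonal projections), and the Mackey--Gleason/Bunce--Wright extension theorem as the precise point where the exclusion of type $I_2$ summands enters. The one step you gloss --- that $\mu(\ul{S})(V)$ depends only on the component $\ul{S}_V$, which is what your ``compatibility'' claim needs and which requires the valuation axiom applied to complementary daseinisations rather than just order-reversal of global sections --- is the same step the paper defers to \cite[p.~8--9]{Measures}.
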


In \cite[Cor.~IV.2]{Measures}, it is also shown that the normal states of $\mc{N}$ correspond to the locally $\sigma$-additive measures \cite[Eqn.~(42)]{Measures} on $P_{cl}\ul{\Sigma}$, i.e.  those that are $\sigma$-additive in each component $V$. 

\section{The topos approach over $\mathcal{V}(\mathcal{N})^*$}\label{sec:shorttopos}

In this section, we discuss the results of the previous section in the setting of the topos over $\mathcal{V}(\mathcal{N})^*$, the elements of height at most two in $\mathcal{V}(\mathcal{N})$. Here, all presheaves of the previous section retain their meaning, but we restrict their domain to the subposet $\mathcal{V}(\mathcal{N})^*$ of $\mathcal{V}(\mathcal{N})$. Rather than work directly with the poset $\mathcal{V}(\mathcal{N})^*$, we work with an isomorphic poset constructed through the Boolean subalgebras of the projection lattice of $\mathcal{N}$. 

\begin{definition}
For a von Neumann algebra $\mathcal{N}$, let $\mathcal{P}(\mathcal{N})$ be its lattice of projections, $\BSub(\mathcal{N})$ be the poset of complete Boolean subalgebras of $\mathcal{P}(\mathcal{N})$, and $\BSub(\mathcal{N})^*$ be the elements of height at most two in $\BSub(\mathcal{N})$.
\end{definition}

It is well known that if $V\in\mathcal{V}(\mathcal{N})$, then $\mathcal{P}(V)$ is a complete Boolean subalgebra of $\mathcal{P}(\mathcal{N})$. Conversely, if $B$ is a complete Boolean subalgebra of $\mathcal{P}(\mathcal{N})$ then its double commutant $\mathcal{P}(V)''$ in $\mathcal{N}$ is an abelian von Neumann subalgebra of $\mathcal{N}$. Further, $\mathcal{P}(V)''=V$ and $\mathcal{P}(B'')=B$. This establishes the following. 

\begin{proposition}
For a von Neumann algebra $\mathcal{N}$, the posets $\mathcal{V}(\mathcal{N})$ and $\BSub(\mathcal{N})$ are isomorphic, and the posets $\mathcal{V}(\mathcal{N})^*$ and $\BSub(\mathcal{N})^*$ are isomorphic. Further, the elements of $\BSub(\mathcal{N})^*$ are exactly the Boolean subalgebras of $\mc{P}(\mathcal{N})$ that have at most 8 elements. 
\end{proposition}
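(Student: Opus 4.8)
The plan is to treat the three assertions in turn, leaning on the facts recalled just before the statement. For the first, the preceding discussion supplies two maps, $V\mapsto\mathcal{P}(V)$ from $\mathcal{V}(\mathcal{N})$ to $\BSub(\mathcal{N})$ and $B\mapsto B''$ in the reverse direction, together with the identities $\mathcal{P}(V)''=V$ and $\mathcal{P}(B'')=B$ witnessing that they are mutually inverse. I would simply observe that both maps are inclusion preserving: if $V\subseteq V'$ then every projection of $V$ is a projection of $V'$, so $\mathcal{P}(V)\subseteq\mathcal{P}(V')$; and if $B\subseteq B'$ then $B''\subseteq B'''$. A mutually inverse pair of order preserving maps is an order isomorphism, giving $\mathcal{V}(\mathcal{N})\cong\BSub(\mathcal{N})$.

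The second assertion is then immediate, since an order isomorphism preserves the length of every chain and hence the height of every element. Thus it carries the elements of height at most two in $\mathcal{V}(\mathcal{N})$ bijectively onto those of $\BSub(\mathcal{N})$; that is, it restricts to an isomorphism $\mathcal{V}(\mathcal{N})^*\cong\BSub(\mathcal{N})^*$.

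The real content is the third assertion, a computation of the height function on $\BSub(\mathcal{N})$. The least element is the two-element subalgebra $\{\hat{0},\hat{1}\}$ of the zero and identity projections, and I would describe the covering relation: $B$ is covered by $C$ in $\BSub(\mathcal{N})$ precisely when $C$ is obtained from $B$ by splitting a single atom of $B$ into two, so that $C$ has exactly one more atom than $B$. Granting this, a finite Boolean subalgebra with $n$ atoms sits at the top of maximal chains of length $n-1$ from $\{\hat{0},\hat{1}\}$ (passing from one atom up to $n$ atoms one split at a time), so its height is $n-1$; and such an algebra, being isomorphic to the power set of its atoms, has $2^n$ elements. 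Hence height at most two is equivalent to having at most three atoms, equivalently to having at most $8$ elements.

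Two points need care, and these are where I expect the work to lie. First, to justify the covering description I must check that no refinement skips an intermediate algebra: if $B\subsetneq C$ but $C$ is not a single split of $B$, then $C$ has at least two atoms beyond $B$, and one can interpolate a subalgebra $B'$ performing just one of the splits, so genuine covers add exactly one atom. Here the relevant splitting projections already lie in $B\subseteq\mathcal{P}(\mathcal{N})$, and finite Boolean algebras are automatically complete, so no existence issue arises and every algebra produced is again a legitimate element of $\BSub(\mathcal{N})$. Second, I must rule out infinite subalgebras: any $B\in\BSub(\mathcal{N})$ that is infinite has either infinitely many atoms or a nonatomic part, and in either case one exhibits an infinite strictly increasing chain of Boolean subalgebras below it, forcing infinite height. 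Together these show that height at most two singles out exactly the Boolean subalgebras of $\mathcal{P}(\mathcal{N})$ with at most $8$ elements.
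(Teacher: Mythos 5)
Your proposal is correct and follows essentially the same route as the paper, whose ``proof'' is just the discussion preceding the proposition: the mutually inverse, inclusion-preserving maps $V\mapsto\mathcal{P}(V)$ and $B\mapsto B''$ (with $\mathcal{P}(V)''=V$ and $\mathcal{P}(B'')=B$) give the two order isomorphisms, and the height-two/8-element identification is left as evident. The only difference is that you spell out what the paper omits --- the cover-counting argument that a finite Boolean subalgebra with $n$ atoms has height $n-1$, that finite Boolean subalgebras are automatically complete, and that infinite ones have infinite height --- all of which is sound.
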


Each of the presheaves of the previous section has a restriction to $\mathcal{V}(\mathcal{N})^*$, and these can be realized in an equivalent way as presheaves over $\BSub(\mathcal{N})^*$. We discuss how the restriction of the spectral presheaf can be realized as a presheaf $\ul{\Sigma}^*$ over $\BSub(\mathcal{N})^*$.

For $V\in\mathcal{V}(\mathcal{N})$, its spectrum is the set of all multiplicative linear functionals $\lambda:V\to\mathbb{C}$. Each such $\lambda$ is determined by its restriction $\lambda|_{\mathcal{P}(V)}$ to the projections of $V$, this restriction is a Boolean algebra homomorphism $\lambda|_{\mathcal{P}(V)}:\mathcal{P}(V)\to 2$, and each Boolean algebra homomorphism from $\mathcal{P}(V)$ to $2$ arises this way. Thus the spectrum of $V$ is realized as the Stone space of $\mathcal{P}(V)$. Define the presheaf $\ul{\Sigma}^*$ on $\BSub(\mathcal{N})^*$ by
\begin{equation}\label{ShortSpectral}
\ul{\Sigma}^*_B = \mbox{ Stone spectrum of }B\quad \mbox{ and } \quad
\ul{\Sigma}^*(i_{B'B}) \mbox{ takes $\lambda$ to its restriction $\lambda|_{B'}$}
\end{equation}
Note that this spectral presheaf $\ul{\Sigma}^*$ is considerably simpler than $\ul{\Sigma}$ both in the fact that its domain is a much simpler poset, and that the objects $\ul{\Sigma}^*$ are Stone spaces of at most 8-element Boolean algebras, hence at most 3-element sets with the discrete topology, rather than infinite compact Hausdorff spaces. We use this spectral sheaf in 
the following version of Theorem~\ref{Kochen-Specker}.

\begin{theorem}\label{ShortKochen-Specker}
The Kochen-Specker theorem for a Hilbert space $\mathcal{H}$ of dim $>2$ is equivalent to the spectral presheaf $\ul{\Sigma}^*$ of $\mc{V}(\mc{H})^*$ having no global sections. 
\end{theorem}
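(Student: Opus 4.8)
The plan is to reduce the statement to the already-available Theorem~\ref{Kochen-Specker} by proving a single lemma: the spectral presheaf $\ul{\Sigma}$ over $\mc{V}(\mc{H})$ has a global section if and only if $\ul{\Sigma}^*$ over $\mc{V}(\mc{H})^*\cong\BSub(\mc{B}(\mc{H}))^*$ does. Granting the lemma, the theorem is immediate: by Theorem~\ref{Kochen-Specker} the Kochen-Specker theorem is equivalent to $\ul{\Sigma}$ having no global section, which by the lemma is equivalent to $\ul{\Sigma}^*$ having no global section; the hypothesis $\dim\mc{H}>2$ is simply inherited from Theorem~\ref{Kochen-Specker}. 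I would prove the lemma for an arbitrary von Neumann algebra $\mc{N}$ in place of $\mc{B}(\mc{H})$, since nothing below uses special features of $\mc{B}(\mc{H})$.

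One direction of the lemma is immediate. Since $\mc{V}(\mc{N})^*$ is a subposet of $\mc{V}(\mc{N})$ and, under the isomorphism of the Proposition, $\ul{\Sigma}^*$ is the restriction of $\ul{\Sigma}$, any global section of $\ul{\Sigma}$ restricts to a global section of $\ul{\Sigma}^*$. For the converse I would start from a global section $(\lambda_B)_{B\in\BSub(\mc{N})^*}$ and extract from it a single function $f:\mc{P}(\mc{N})\to 2$, defined by $f(\hat P)=\lambda_{\langle\hat P\rangle}(\hat P)$, where $\langle\hat P\rangle=\{0,\hat P,\hat P^\perp,1\}$ is the four-element Boolean subalgebra generated by $\hat P$. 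Naturality of the section forces $\lambda_B(\hat P)=f(\hat P)$ whenever $\hat P\in B$, so $f$ faithfully records the section on the four- and eight-element algebras.

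The core of the proof is to show that this $f$ comes from a global section of $\ul{\Sigma}$; concretely, that $f$ restricts to a Boolean algebra homomorphism $\mc{P}(V)\to 2$, i.e.\ a point of the Stone spectrum $\ul{\Sigma}_V$, for every $V\in\mc{V}(\mc{N})$. The two available object-types in $\BSub(\mc{N})^*$ encode precisely the two- and three-atom resolutions of the identity: the four-element algebras give $f(\hat P^\perp)=1-f(\hat P)$, and the eight-element algebras give that for orthogonal $\hat P_1,\hat P_2,\hat P_3$ with $\hat P_1+\hat P_2+\hat P_3=1$ exactly one $f(\hat P_i)$ is $1$. From these I would first derive additivity on an orthogonal pair, $f(\hat P\vee\hat Q)=f(\hat P)+f(\hat Q)$, by applying the three-atom condition to $\hat P$, $\hat Q$, $(\hat P\vee\hat Q)^\perp$, and then extend it to all finite orthogonal families by an easy induction. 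For commuting $\hat P,\hat Q\in\mc{P}(V)$, decomposing the identity as the orthogonal sum of $\hat P\wedge\hat Q$, $\hat P\wedge\hat Q^\perp$, $\hat P^\perp\wedge\hat Q$, $\hat P^\perp\wedge\hat Q^\perp$ and using that exactly one of these four $f$-values is $1$ then yields $f(\hat P\wedge\hat Q)=f(\hat P)\wedge f(\hat Q)$ and $f(\hat P\vee\hat Q)=f(\hat P)\vee f(\hat Q)$ in each case, so $f|_{\mc{P}(V)}$ is a Boolean homomorphism. These homomorphisms agree on overlaps because they are all restrictions of the one function $f$, and hence by Stone duality assemble into a compatible family $(\lambda_V)_{V\in\mc{V}(\mc{N})}$, a global section of $\ul{\Sigma}$.

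The main obstacle to watch for --- and exactly the place where the height restriction might seem fatal --- is that the four-atom decomposition used in the last step corresponds to a sixteen-element Boolean algebra of height three, which does not belong to $\BSub(\mc{N})^*$. The resolution is that the section is never evaluated on such an algebra: the identity $f(\hat P\wedge\hat Q)+\cdots+f(\hat P^\perp\wedge\hat Q^\perp)=1$ is deduced purely from the finite additivity already obtained from the two- and three-atom data. For the same reason no $\sigma$-additivity or complete additivity is needed, since points of the Stone spectrum are ordinary finitary Boolean homomorphisms; the argument is therefore insensitive to $\mc{H}$ being infinite-dimensional and holds verbatim for an arbitrary $\mc{N}$.
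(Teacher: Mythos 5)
Your proposal is correct, and its technical heart coincides with the paper's: you define the candidate function on four-element algebras $B_p=\{0,p,p^\perp,1\}$ and extract finite additivity from eight-element algebras, exactly as the paper does. Where you differ is the overall route. The paper never mentions the full presheaf $\ul{\Sigma}$: it unfolds the Kochen-Specker theorem as the nonexistence of a finitely additive $0,1$-valued measure on $\mc{P}(\mc{H})$ and proves the equivalence of that statement with the nonexistence of global sections of $\ul{\Sigma}^*$ directly --- the forward direction is your extraction of $f$, and the converse is just the observation that a finitely additive $2$-valued measure restricts to a Boolean homomorphism on each $B\in\BSub(\mc{H})^*$. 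You instead prove a bridge lemma, that $\ul{\Sigma}$ has a global section if and only if $\ul{\Sigma}^*$ does, and then cite Theorem~\ref{Kochen-Specker} as a black box; this costs you the extra step of showing $f$ is multiplicative on every context $\mc{P}(V)$, including infinite-dimensional ones (your four-fold decomposition argument, which is sound and correctly avoids evaluating the section at height-three algebras), together with the Stone--Gelfand identification of the spectrum of $V$ with the Boolean homomorphisms $\mc{P}(V)\to 2$, which the paper also relies on. What your route buys is a genuinely stronger byproduct --- restriction gives a bijection $\Gamma\ul{\Sigma}\to\Gamma\ul{\Sigma}^*$, so the global-sections functor cannot distinguish the two presheaves --- and it works verbatim for any von Neumann algebra $\mc{N}$. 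What the paper's route buys is brevity and the thematic point stressed in the remark following its proof: the argument ``lives'' entirely at height at most two, with the rest of $\mc{V}(\mc{N})$ playing no role. One small repair in your write-up: the three-atom condition degenerates when the orthogonal pair satisfies $p\vee q=1$ (the generated algebra then has four elements, not eight) or when one of $p,q$ is $0$ or $1$; these cases need a one-line separate treatment, as they implicitly do in the paper's proof as well.
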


\begin{proof}
Rephrasing, the statement means that from a global section $s$ of $\ul{\Sigma}^*$ one can construct a finitely additive 0,1-valued measure $\sigma$ on the projections of $\mathcal{H}$, and from a finitely additive 0,1-valued measure $\sigma$ on the projections of $\mathcal{H}$ one can construct a global section $s$ of $\ul{\Sigma}^*$. We work with the poset $\BSub(\mathcal{B}(\mathcal{H}))^*$, which we denote $\BSub(\mathcal{H})^*$, and the realization of the spectral presheaf given in (\ref{ShortSpectral}). 

Suppose $s$ is a global section of $\ul{\Sigma}^*$. Then for each $B\in\BSub(\mathcal{H})^*$ we have an element $s_B$ in the Stone space of $B$, and if $B'\subseteq B$ then $s_{B'}$ is the restriction $s_B|B'$ of the homomorphism $s_B:B\to 2$. For each projection $p$ of $\mathcal{H}$ different from 0,1 we have that $B_p=\{0,p,p',1\}$ is a 4-element Boolean subalgebra of $\mathcal{P}(\mathcal{H})$. Define $\sigma:\mathcal{P}(\mathcal{H})\to 2$ by setting $\sigma(p)=s_{B_p}(p)$ for $p\neq 0,1$ and setting $\sigma(0)=0$, and $\sigma(1)=1$. 

To see that $\sigma$ is a finitely additive measure, we need only show that if $p,q$ are orthogonal projections, then $\sigma(p\vee q)=\sigma(p)+\sigma(q)$. It is enough to show this when neither $p,q$ is $0,1$. But then there is an 8-element Boolean subalgebra $B$ of $\mathcal{P}(\mathcal{H})$ containing both $p,q$. Since $s_{B_p}$ is the restriction of $s_B$ we have that $\sigma(p)=s_{B_p}(p)=s_B(p)$. Similarly, $\sigma(q)=s_B(q)$ and $\sigma(p\vee q)=s_B(p\vee q)$. So our result follows since $s_B$ is a Boolean algebra homomorphism. 

Conversely, suppose $\sigma:\mathcal{P}(\mathcal{H})\to 2$ is a finitely additive measure. Then for each Boolean subalgebra $B$ of $\mathcal{P}(\mathcal{H})$ we have $\sigma|B$ is a Boolean algebra homomorphism. So we obtain a global section $s$ of $\ul{\Sigma}^*$ by setting $s_B = \sigma|B$. 
\end{proof}

\begin{remark}
The proof of Theorem~\ref{ShortKochen-Specker} essentially just extracts the pertinent details from the proof of Theorem~\ref{Kochen-Specker}. That one can do this is the essential point. The proof ``lives'' at the level of $\mathcal{V}(\mathcal{N})^*$, and the vast portion of $\mathcal{V}(\mathcal{N})$ plays no role. 
\end{remark}

\begin{remark}
Suppose $P$ is a subposet of $\mathcal{V}(\mathcal{N})$. One can always restrict presheaves over $\mathcal{V}(\mathcal{N})$ to presheaves over $P$. The dangers in relying on restrictions to move results such as those of the previous section from one topos of presheaves to another is that restriction may not be a one-one or onto correspondence between presheaves of $\mathcal{V}(\mathcal{N})$ and presheaves of $P$. 

If we let $P$ be all elements of height one in $\mathcal{V}(\mathcal{N})$, then the analog of Theorem~\ref{ShortKochen-Specker} fails for presheaves over $P$. In this case $P$ consists of an antichain with one element for each Boolean subalgebra $B_p=\{0,p,p',1\}$ of $\mathcal{P}(\mathcal{H})$. The spectral presheaf is an indexed family of 2-element sets. Each choice function of the product of these two element sets gives a presheaf of 1-element sets that yields a global section of the spectral presheaf. Including elements of height 2 in $P$ provides an additional restriction that eliminates these unwanted presheaves. 

Restriction from presheaves of $\mathcal{V}(\mathcal{N})$ to presheaves of $\mathcal{V}(\mathcal{N})^*$ is not one-one, and we do not know if it is onto. However, restriction in this setting is sufficiently well behaved to preserve the results of the previous setting, and this restricted setting provides much simpler presheaves. 
\end{remark}

\begin{remark}
A question related to the existence of a 0,1-valued measure on the projection lattice $\mc{P}(\mc{H})$ was raised in \cite{JohnDerek}. Does there exist a non-constant $\mathbb{Z}_2$-valued state on $\mc{P}(\mc{H})$ when $\mc{H}$ has dimension three? This amounts to assigning to the atoms $P$ of $\mc{P}(\mc{H})$ values 0 or 1 such that each pairwise orthogonal triple of atoms has an even number assigned 1. 

We remark that this can be formulated as the existence of a global section of a presheaf $\ul{G}$ of groups on $\BSub(\mc{H})$ (which in dimension 3 is equal to $\BSub(\mc{H})^*$). For each 4-element Boolean subalgebra $B$ let $\ul{G}_B$ be $\mathbb{Z}_2$ and for each 8-element Boolean subalgebra let $\ul{G}_B$ be the Klein 4 group $\mathbb{K}$. There are three 4-element Boolean subalgebras of any 8-element Boolean algebra, and there are three projections from $\mathbb{K}$ onto $\mathbb{Z}_2$. For each 8-element Boolean algebra assign these three projections as the maps $\ul{G}(i_{B',B})$ for the three 4-element Boolean subalgebras $B'$ of $B$ in some fashion. Then for a global section $\sigma$ of $\ul{G}$ define for an atom $P$ of $\mc{P}(\mc{H})$ the value $\sigma(B_P)$ where $B_P=\{0,P,1-P,1\}$. If $P,Q,R$ are pairwise orthogonal atoms with $B$ the Boolean subalgebra containing them, one sees using $\sigma(B)$ that either none, or exactly two, are assigned value 1. Conversely, any $\mathbb{Z}_2$-valued state can be used to build a global section. 

In \cite[p.~91]{Main} it is suggested that homological methods might be tied to the connection between the Kochen-Specker theorem and global sections of the spectral presheaf. If any progress in this direction occurs, it may also lead to progress on the $\mathbb{Z}_2$-valued state problem which remains unsolved. See also~\cite{Roumen,Caru}.
\end{remark}

We return to our program of studying results of Section~2 in the setting of restrictions to $\BSub(\mc{N})^*$. Outer daseinisation $\delta^o(\hat{A})$ of a self adjoint operator $\hat{A}$ of $\mathcal{N}$ is given in (\ref{outerdaseinisation}). For each such $\hat{A}$ this provides \cite[Thm.~7.1]{Main} a global element of the ``outer de Groot presheaf'' $\ul{\mathbb{O}}$ \cite[Def.~7.3]{Main} where $\ul{\mathbb{O}}_V=V_{sa}$ the self-adjoint operators of $V$ and $\ul{\mathbb{O}}(i_{V'V}):V_{sa}\to V'_{sa}$ is given by outer daseinisation. The restriction of $\delta^o$ to $\mathcal{V}(\mathcal{N})^*$ is realized over $\BSub(\mathcal{N})^*$ via the double commutant, with $(\delta^o)^*(\hat{A})_B=\delta^o(\hat{A})_{B''}$. The restriction of $\ul{\mathbb{O}}$ to $\mathcal{V}(\mathcal{N})^*$ is realized as a presheaf $\ul{\mathbb{O}^*}$ over $\BSub(\mc{N})^*$ where $\ul{\mathbb{O}^*}_B = (B'')_{sa}$ and $\ul{\mathbb{O}^*}(i_{B'B})(\hat{A})=\delta^o_{(B')''}(\hat{A})$. We then have the following analog of \cite[Thm.~7.1]{Main}. 

\begin{theorem}
Outer daseinisation $(\delta^o)^*$ gives an injective mapping $(\delta^o)^*:\mathcal{N}_{sa}\to\Gamma\ul{\mathbb{O}^*}$ from the self-adjoint operators of $\mathcal{N}$ to the global sections of the outer presheaf on $\BSub(\mathcal{N})^*$. 
\end{theorem}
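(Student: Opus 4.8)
The plan is to establish two things: that $(\delta^o)^*$ lands in the global sections of $\ul{\mathbb{O}^*}$ (well-definedness) and that it is injective. The well-definedness is essentially inherited from the full-poset result \cite[Thm.~7.1]{Main}: since $\delta^o(\hat{A})$ is already a global element of $\ul{\mathbb{O}}$ over all of $\mc{V}(\mc{N})$, its restriction to the subposet $\mc{V}(\mc{N})^*$ (transported to $\BSub(\mc{N})^*$ via the double commutant) is automatically a global element of the restricted presheaf $\ul{\mathbb{O}^*}$. Concretely, I would check that the naturality squares for $(\delta^o)^*$ commute, i.e. that for $B'\subseteq B$ in $\BSub(\mc{N})^*$ we have $\ul{\mathbb{O}^*}(i_{B'B})\big((\delta^o)^*(\hat{A})_B\big)=(\delta^o)^*(\hat{A})_{B'}$; unwinding the definitions $(\delta^o)^*(\hat{A})_B=\delta^o(\hat{A})_{B''}$ and $\ul{\mathbb{O}^*}(i_{B'B})(\hat{B})=\delta^o_{(B')''}(\hat{B})$, this reduces to the identity $\delta^o_{(B')''}\big(\delta^o(\hat{A})_{B''}\big)=\delta^o(\hat{A})_{(B')''}$, which is exactly the functoriality of outer daseinisation along the inclusion $(B')''\subseteq B''$ already guaranteed in the $\mc{V}(\mc{N})$ setting.

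The substantive content is injectivity, and this is where the work lies. The plan is to recover $\hat{A}$ from the family $\big(\delta^o(\hat{A})_{B''}\big)_{B\in\BSub(\mc{N})^*}$ using only the small algebras. The key point is that the height-at-most-two Boolean subalgebras correspond (via the double commutant) to abelian von Neumann subalgebras generated by at most two orthogonal projections, and these are enough to detect every spectral projection of $\hat{A}$. For a self-adjoint $\hat{A}$ and a real number $r$, the spectral projection $\hat{E}^{\hat A}_r$ (the projection onto the part of the spectrum below $r$) can be extracted from outer daseinisations onto suitable small abelian subalgebras: specifically, for any single projection $\hat{P}$, the algebra $V_{\hat P}$ generated by $\hat{P}$ lies in $\mc{V}(\mc{N})^*$, and the behaviour of $\delta^o(\hat{A})_{V_{\hat P}}$ encodes whether $\hat{A}$ dominates, or is dominated by, scalar multiples associated to $\hat{P}$. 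Since the spectral order $\preceq$ is determined by the pointwise order of spectral families, and since two self-adjoint operators with the same spectral family are equal, recovering all spectral projections of $\hat{A}$ recovers $\hat{A}$.

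More precisely, I would argue that $\delta^o(\hat{A})_{V_{\hat P}}=\bigwedge\{\hat{B}\in (V_{\hat P})_{sa}:\hat{A}\preceq\hat{B}\}$ is computable inside a two-element-generated abelian algebra, and that as $\hat{P}$ ranges over all projections these determine the spectral family of $\hat{A}$. The cleanest route is probably to show that if $(\delta^o)^*(\hat{A})=(\delta^o)^*(\hat{A}')$ then $\delta^o(\hat{A})_V=\delta^o(\hat{A}')_V$ for every $V\in\mc{V}(\mc{N})^*$, and then invoke the fact that outer daseinisation onto the full family already separates self-adjoint operators; the remaining task is to see that the height-two algebras alone suffice for this separation. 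This last reduction --- that no algebra of height greater than two is needed to recover the full spectral family --- is the main obstacle, and I expect it to hinge on the observation that every spectral projection $\hat{E}^{\hat A}_r$ can be realized as $\delta^o(\hat{A})_V$ or as a Boolean combination thereof for $V$ generated by a single projection, so that the individual two-element snapshots carry all the spectral information even though no single snapshot sees all of $\hat{A}$.
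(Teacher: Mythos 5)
Your well-definedness paragraph is fine and matches the paper: naturality restricts from $\mc{V}(\mc{N})$ to $\mc{V}(\mc{N})^*$ and nothing more is needed there. The genuine gap is in the injectivity argument, which is where all the content of the theorem lies. You reduce injectivity to the claim that ``every spectral projection $\hat{E}^{\hat{A}}_r$ can be realized as $\delta^o(\hat{A})_V$ or as a Boolean combination thereof for $V$ generated by a single projection,'' and you explicitly leave this claim unproved (``the main obstacle \dots I expect it to hinge on''). As stated, this claim cannot be the mechanism: everything you can form at the stage $V_{\hat{P}}$ --- the operator $\delta^o(\hat{A})_{V_{\hat{P}}}$, its spectral projections, and any Boolean combinations of these --- lies inside the four-element Boolean algebra $\{0,\hat{P},1-\hat{P},1\}$, whereas $\hat{E}^{\hat{A}}_r$ lies in no such algebra unless $\hat{P}$ is chosen to be $\hat{E}^{\hat{A}}_r$ itself or its complement. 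So either the generating projection must be taken to be a spectral projection of $\hat{A}$ (which is legitimate for proving injectivity, where both operators are in hand, but is not a reconstruction from the section alone), or information from \emph{different} stages $V_{\hat{P}}$ must be assembled across the poset; your sketch does neither, and it also never pins down the one technical fact everything rests on, namely the relation \cite[Defn.~7.2, (7.21)]{Main} between the resolution of the identity of the daseinised operator $\delta^o(\hat{A})_V$ and the daseinisations $\delta^o(\hat{E}^{\hat{A}}_\mu)_V$ of the spectral projections.

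For comparison, the paper's proof closes this gap with a single witness rather than a reconstruction: given distinct $\hat{A},\hat{A}'$ with spectral resolutions $E_\lambda,E'_\lambda$, uniqueness of spectral resolutions \cite[Thm.~5.2.4]{Kadisson} gives $\lambda_0$ with, say, $E'_{\lambda_0}\nleq E_{\lambda_0}$; right-continuity $E_{\lambda_0}=\bigwedge\{E_\mu:\lambda_0<\mu\}$ gives $\mu_0>\lambda_0$ with $E'_{\lambda_0}\nleq E_{\mu_0}$; and then at the single stage $B=\{0,E_{\mu_0},1-E_{\mu_0},1\}$, $V=B''$, the resolution of the identity of $\delta^o(\hat{A})_V$ at $\lambda_0$ is $\leq\delta^o(E_{\mu_0})_V=E_{\mu_0}$, while that of $\delta^o(\hat{A}')_V$ is $\geq 1-E_{\mu_0}$ because $E'_\mu\nleq E_{\mu_0}$ forces $\delta^o(E'_\mu)_V\geq 1-E_{\mu_0}$ in a four-element algebra; so the two sections already differ at $B$, again by \cite[Thm.~5.2.4]{Kadisson}. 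Note that the separating algebra is generated by a spectral projection of one of the two given operators, and the choice of $\mu_0>\lambda_0$ is exactly what handles the regularization issue your sketch ignores. Your reconstruction idea can in fact be repaired --- by \cite[(7.21)]{Main} the resolution of the identity of $\delta^o(\hat{A})_{V_{\hat{P}}}$ at $\lambda$ is $\bigwedge\{\delta^o(E_\mu)_{V_{\hat{P}}}:\lambda<\mu\}$, from which one reads off, for each $\lambda$, whether $E_\mu\leq\hat{P}$ for some $\mu>\lambda$; since $\{\lambda:E_\lambda\leq\hat{P}\}$ is a down-set closed under right limits, ranging over all projections $\hat{P}$ determines each $E_\lambda$ as $\bigwedge\{\hat{P}:E_\lambda\leq\hat{P}\}$ and hence determines $\hat{A}$ --- but this argument, like the paper's, is a genuine piece of work about how daseinisation interacts with spectral families, not the formality your proposal hopes it to be.
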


\begin{proof}
That $(\delta^o)^*$ gives such a mapping follows from \cite[Thm.~7.1]{Main} and the fact that the ingredients involved are realizations over $\BSub(\mathcal{N})^*$ of restrictions of the outer presheaf and outer daseinisation over $\mathcal{V}(\mathcal{N})$. However, injectivity does not follow immediately, and further can not be obtained from the argument for injectivity given in \cite[Thm.~7.1]{Main} which involves infinite dimensional von Neumann subalgebras. We argue this directly. 

Suppose $\hat{A}$ and $\hat{A}'$ are distinct self-adjoint operators of $\mathcal{N}$ with $E_\lambda$ $(\lambda\in\mathbb{R})$ the spectral resolution \cite[p.~94]{Main} of $\hat{A}$ and $E_\lambda'$ $(\lambda\in\mathbb{R})$ that of $\hat{A}'$. Since $\hat{A}\neq \hat{A}'$, 
their spectral resolutions differ \cite[Thm.~5.2.4]{Kadisson}, so there is $\lambda_0\in\mathbb{R}$ with $E_{\lambda_0}\neq E_{\lambda_0}'$. Assume $E_{\lambda_0}'\nleq E_{\lambda_0}$. From the definition of a spectral resolution, $E_{\lambda_0} = \bigwedge\{E_\mu:\lambda_0<\mu\}$, so there is $\lambda_0 < \mu_0$ with $E_{\lambda_0}'\nleq E_{\mu_0}$. Let $B=\{0,E_{\mu_0}, 1-E_{\mu_0},1\}$ be the Boolean algebra of projections generated by $E_{\mu_0}$ and let $V=B''$ be its double commutant. Note that $B$ is a 4-element Boolean algebra, so belongs to $\BSub(\mathcal{N})^*$. We claim that the global sections $(\delta^o)^*(\hat{A})$ and $(\delta^o)^*(\hat{A}')$ differ at their $B$ components. 

Recall that $\delta^0(E)_V$ is the least projection in $V$ above $E$. Then since $\lambda_0<\mu_0$ we have $\bigwedge\{\delta^0(E_\mu)_V:\lambda_0<\mu\}\leq \delta^0(E_{\mu_0})_V=E_{\mu_0}$. Since $E_{\lambda_0}'\nleq E_{\mu_0}$ and spectral resolutions are order preserving, we have $E_\mu'\nleq E_{\mu_0}$ for each $\lambda_0<\mu$. Since $B$ has 4 elements, this implies that $\delta^0(E_\mu')_V\geq 1-E_{\mu_0}$ for each $\lambda_0<\mu$, and therefore $\bigwedge\{\delta^0(E_\mu')_V:\lambda_0<\mu\}\geq 1-E_{\mu_0}$. By \cite[p.~95]{Main} there are resolutions of the identity in $V$ given by 
\[\lambda\mapsto\bigwedge\{\delta^o(E_\mu)_V:\lambda<\mu\}\quad\mbox{ and } \quad\lambda\mapsto\bigwedge\{\delta^o(E_\mu')_V:\lambda<\mu\}\]
By \cite[Defn.~7.2]{Main}  and \cite[(7.21)]{Main} the first of these is the resolution of the identity of the self-adjoint operator $\delta^o(\hat{A})_V$ and the second is that of $\delta^0(\hat{A}')_V$. Since these resolutions of the identity differ at $\lambda_0$, by \cite[Thm.~5.2.4]{Kadisson} we have $\delta^o(\hat{A})_V\neq \delta^0(\hat{A}')_V$. Thus $(\delta^o)^*(\hat{A})_B\neq (\delta^0)^*(\hat{A}')_B$, showing that the global sections $(\delta^o)^*(\hat{A})$ and $(\delta^o)^*(\hat{A}')$ differ.
\end{proof}

\begin{remark}
While the mapping $\delta^0$ of the self-adjoint operators $\hat{A}$ of $\mathcal{N}$ into the global sections of the outer presheaf over $\mathcal{V}(\mathcal{N})$ is injective, by \cite[p.~99]{Main} it is not surjective. There are many other instances in topos quantum mechanics over $\mathcal{V}(\mathcal{N})$ providing an injective, but not surjective, mapping of some standard quantum mechanical notion into a the the topos formalism. 

Examples of injective, but not surjective, maps include the following. Inner daseinisation gives a map $\delta^i$ from the self-adjoint operators $\hat{A}$ of $\mathcal{N}$ into the global sections of the inner presheaf $\ul{\mathbb{I}}$ \cite[Thm.~7.1]{Main}; the map of the projection operators $\mathcal{P}(\mathcal{N})$ into the global sections of the presheaf $\mathcal{P}_{cl}\ul{\Sigma}$ of clopen subsets of the spectral presheaf \cite[Thm.~5.4]{Main}; and the mappings of pure states $|\psi\rangle$ of $\mathcal{H}$ to truth objects \cite[p.~81]{Main} and to pseudo-states \cite[p.83]{Main}. In each case, showing these maps are injective is accomplished by finding a 4-element Boolean subalgebra $B$ of projections so that they differ at the stage $V=B''$. Thus, the corresponding maps in the setting of $\BSub(\mathcal{N})^*$ are also injective. 
\end{remark}

\begin{remark}\label{clopen=power}
Restricting the presheaf $P_{cl}\ul{\Sigma}$ of clopen subsets of the spectral presheaf to the setting of $\BSub(\mathcal{N})^*$ provides considerable simplification. For a general abelian von Neumann subalgebra $V$, its spectrum $\ul{\Sigma}_V$ is a compact Hausdorff space. When $V = B''$ for a Boolean algebra of projections of $\mathcal{N}$ with at most 8 elements, the spectrum $\ul{\Sigma}_V$ is a discrete space with at most 3 elements, thus clopen subsets of $\ul{\Sigma}_V$ are simply subsets of $V$. So the restriction of $P_{cl}\ul{\Sigma}$ to $\BSub(\mathcal{N})^*$ is simply the power object of the restriction of $\ul{\Sigma}$. Thus, there is no need for a proof that this power object is a Heyting algebra as power objects are Heyting algebras in any topos. 
\end{remark}

We consider Theorem~\ref{SpecThm} in view of restrictions to $\BSub(\mathcal{N})^*$. Our argument uses only properties of restriction, and not the details specific to this situation, so applies to many such situations. Suppose $\hat{A}$ is a self-adjoint operator of $\mathcal{V}$. Theorem~\ref{SpecThm} shows that $\check{\delta}(\hat{A})$ is a natural transformation from the spectral presheaf $\ul{\Sigma}$ to the value presheaf $\ul{\mathbb{R}^\leftrightarrow}$. This means that for each $V\in\mathcal{V}(\mathcal{N})$ there is a function $\check{\delta}(\hat{A})_V:\ul{\Sigma}_V\to\ul{\mathbb{R}^\leftrightarrow}_V$, and that this collection of functions commutes with the maps $\ul{\Sigma}(i_{V'V})$ and $\ul{\mathbb{R}^\leftrightarrow}(i_{V'V})$. Restricting to subalgebras $V\in\mathcal{V}(\mathcal{N})^*$ we retain these maps and their commutativity, and moving this to the setting of the Boolean algebras $B$ of projections of such $V$ transfers these to the setting of $\BSub(\mathcal{N})^*$. Thus, we have the following essentially for free. 

\begin{theorem}\label{SpecThm*}
For a self adjoint element $\hat{A}$ of $\mathcal{V}$, we have that $\check{\delta}(\hat{A})^*$ is a natural transformation from the spectral presheaf $\ul{\Sigma}^*$ to the value presheaf $\ul{\mathbb{R}^\leftrightarrow}^*$. 
\end{theorem}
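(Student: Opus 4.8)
The plan is to prove this purely formally by restriction, exactly as the preceding discussion indicates; the only work lies in checking that restriction behaves well enough that $\ul{\mathbb{R}^\leftrightarrow}^*$ really is the intended object. First I would recall that a natural transformation $\tau\colon\ul{F}\to\ul{G}$ of presheaves over a poset $Q$ is nothing more than a family of functions $\tau_V\colon\ul{F}_V\to\ul{G}_V$, one for each $V\in Q$, satisfying $\tau_{V'}\circ\ul{F}(i_{V'V})=\ul{G}(i_{V'V})\circ\tau_V$ whenever $V'\subseteq V$. Consequently, for any subposet $P\subseteq Q$, restricting the index set of such a family to $P$ yields a natural transformation between the restrictions $\ul{F}|_P$ and $\ul{G}|_P$, since the commuting squares required over $P$ form a subfamily of those already established over $Q$. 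Thus Theorem~\ref{SpecThm} immediately supplies a natural transformation from $\ul{\Sigma}|_{\mathcal{V}(\mathcal{N})^*}$ to $\ul{\mathbb{R}^\leftrightarrow}|_{\mathcal{V}(\mathcal{N})^*}$, with components $\check{\delta}(\hat{A})_V$ for $V\in\mathcal{V}(\mathcal{N})^*$.

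The one point that requires genuine attention is the identification of these restricted presheaves with the intended objects $\ul{\Sigma}^*$ and $\ul{\mathbb{R}^\leftrightarrow}^*$ over $\BSub(\mathcal{N})^*$. For the spectral presheaf this was already carried out in~(\ref{ShortSpectral}). For the value presheaf I would observe the structural fact that makes the restriction clean: if $V\in\mathcal{V}(\mathcal{N})$ has height at most two, then every element below $V$ also has height at most two, so the downset ${\down}V$ computed in $\mathcal{V}(\mathcal{N})$ coincides with the downset computed in $\mathcal{V}(\mathcal{N})^*$. Since $\ul{\mathbb{R}^\leftrightarrow}_V$ depends only on the order preserving and order inverting functions on ${\down}V$, its value is unchanged by the restriction, and $\ul{\mathbb{R}^\leftrightarrow}^*$ is literally the restriction of $\ul{\mathbb{R}^\leftrightarrow}$ rather than a modified presheaf. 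This is precisely where the height-at-most-two hypothesis does the necessary work: without downset stability one could not expect the components to survive restriction unaltered.

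Finally I would transfer the whole diagram from $\mathcal{V}(\mathcal{N})^*$ to $\BSub(\mathcal{N})^*$ along the poset isomorphism of the Proposition, given by $B\mapsto B''$ with inverse $V\mapsto\mc{P}(V)$. Since this isomorphism identifies $\ul{\Sigma}|_{\mathcal{V}(\mathcal{N})^*}$ and $\ul{\mathbb{R}^\leftrightarrow}|_{\mathcal{V}(\mathcal{N})^*}$ with $\ul{\Sigma}^*$ and $\ul{\mathbb{R}^\leftrightarrow}^*$, and carries commuting naturality squares to commuting naturality squares, the family $\check{\delta}(\hat{A})^*_B:=\check{\delta}(\hat{A})_{B''}$ is a natural transformation $\ul{\Sigma}^*\to\ul{\mathbb{R}^\leftrightarrow}^*$, as claimed. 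I do not expect any real obstacle here: unlike the injectivity statement, whose verification genuinely escaped $\mathcal{V}(\mathcal{N})^*$ and had to be argued afresh, everything needed for this result already lives at the level of height at most two, so once downset stability is noted the theorem follows by inspection.
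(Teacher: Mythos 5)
Your proposal is correct and follows essentially the same route as the paper: the paper's proof likewise unwinds naturality into a family of components commuting with the restriction maps, restricts this family to $\mathcal{V}(\mathcal{N})^*$, and transfers it to $\BSub(\mathcal{N})^*$ along the poset isomorphism $B\mapsto B''$. Your extra check that $\mathcal{V}(\mathcal{N})^*$ is down-closed in $\mathcal{V}(\mathcal{N})$, so that $\ul{\mathbb{R}^\leftrightarrow}^*$ is literally the restriction of $\ul{\mathbb{R}^\leftrightarrow}$, is a point the paper leaves implicit, but it is added care within the same argument rather than a different approach.
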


\begin{remark}
Theorems in the $\mathcal{V}(\mathcal{N})$ setting of an ``equational'' nature, such as Theorem~\ref{SpecThm}, move for free to the setting of $\BSub(\mathcal{N})^*$. The results discussed earlier in this section involving the injectivity of certain correspondences are not ``equational'' in nature, but as we have seen, are still retained upon restriction. In fact, in each case injectivity is realized at stage $V=B''$ for a 4-element Boolean algebra of projections. Another issue that is not ``equational'' is the connection between the Kochen-Specker theorem and the non-existence of a global section of the spectral presheaf. We have seen in Theorem~\ref{ShortKochen-Specker} this is retained upon restriction to $\BSub(\mathcal{H})^*$, but it would not be retained if restriction was to the at most 4-element Boolean subalgebras $\BSub(\mathcal{H})^{**}$ of projections. 
\end{remark}

We conclude this section with a view of Theorem~\ref{measures} in the context of restriction to $\BSub(\mathcal{N})^*$. This is of particular interest as this theorem specifies a bijective correspondence between a notion from the setting of $\mathcal{N}$ and one in the setting of $\mathcal{V}(\mathcal{N})$. Thus it is far from equational, and appears more substantive than the simple results involving injectivity. Further, it involves the clopen subobjects $P_{cl}\ul{\Sigma}$ of the spectral presheaf, an object that changes radically upon restriction to the $\BSub(\mathcal{N})^*$ setting to the full power object of the restriction $\ul{\Sigma}^*$ of the spectral presheaf. Still, this result is preserved by restriction.

\begin{theorem} \label{measures*}
For any von Neumann algebra $\mc{N}$ with no direct summand of type $I_2$ there is a bijection between the set of states of $\mc{N}$ and the set of measures on the presheaf $P\ul{\Sigma}^*$ of subobjects of the spectral presheaf $\ul{\Sigma}^*$ of $\BSub(\mc{N})^*$. 
\end{theorem}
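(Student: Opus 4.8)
The plan is to factor the desired bijection through the finitely additive probability measures on the projection lattice $\mc{P}(\mc{N})$, which are the engine of Theorem~\ref{measures}. Concretely, I would prove that measures on $P\ul{\Sigma}^*$ are in bijection with such finitely additive measures, and then invoke the Gleason-type correspondence between finitely additive measures on $\mc{P}(\mc{N})$ and states of $\mc{N}$ --- valid precisely because $\mc{N}$ has no type $I_2$ summand --- which is the substance of the proof of Theorem~\ref{measures}. To set up, I first unpack the notion of a measure on $P\ul{\Sigma}^*$. By Remark~\ref{clopen=power} the object $P\ul{\Sigma}^*$ is the full power object, so its global elements are all subobjects $\ul{S}$ of $\ul{\Sigma}^*$, and each $\ul{S}$ is given at a stage $B\in\BSub(\mc{N})^*$ by a subset $\ul{S}_B$ of the at-most-three-point Stone space of $B$, equivalently by a projection $\hat p^{\ul{S}}_B\in B$. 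Meets and joins in the Heyting algebra $P\ul{\Sigma}^*$ are computed stage-wise as intersection and union, hence as meet and join of these projections inside the finite Boolean algebra $B$, while a global section of the restricted value presheaf is an order-reversing map $\BSub(\mc{N})^*\to[0,1]$. Thus the valuation law of Definition~\ref{defn:measure}, read at a stage $B$, asserts exactly finite additivity on $B$.

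The direction from finitely additive measures to measures on $P\ul{\Sigma}^*$ is then immediate and exhibits the structure. Given a finitely additive $m\colon\mc{P}(\mc{N})\to[0,1]$ with $m(1)=1$, define $\nu_m(\ul{S})$ to be the function $B\mapsto m(\hat p^{\ul{S}}_B)$. This depends only on the component $\ul{S}_B$, so is well defined; it is order-reversing because passing to a smaller stage $B'\subseteq B$ replaces $\hat p^{\ul{S}}_B$ by its daseinisation in $B'$, which lies above it, and $m$ is monotone; it is normalized because the component of $\ul{\Sigma}^*$ at every stage is the whole Stone space, corresponding to the projection $1$, whence $\nu_m(\ul{\Sigma}^*)$ is constantly $m(1)=1$; and the valuation law holds because at each stage $B$ it reduces to the modularity $m(a\vee b)+m(a\wedge b)=m(a)+m(b)$ on the finite Boolean algebra $B$, which follows from finite additivity.

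The reverse assignment is where the real content lies, and it mirrors the proof of Theorem~\ref{ShortKochen-Specker}. Given a measure $\nu$ on $P\ul{\Sigma}^*$, set $m(p)=\nu(\ul{\delta(p)}^*)(B_p)$ for $p\neq 0,1$, where $B_p=\{0,p,p',1\}$, together with $m(0)=0$ and $m(1)=1$. To check finite additivity it suffices, exactly as in Theorem~\ref{ShortKochen-Specker}, to treat orthogonal $p,q\notin\{0,1\}$: choosing an eight-element $B\in\BSub(\mc{N})^*$ containing both $p$ and $q$ and applying the valuation law for $\nu$ at the stage $B$ to the subobjects $\ul{\delta(p)}^*$ and $\ul{\delta(q)}^*$, whose $B$-components are the clopen sets of $p$ and $q$, collapses to $m(p\vee q)=m(p)+m(q)$, since at stage $B$ their join and meet are the clopen sets of $p\vee q$ and $p\wedge q=0$. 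One then verifies that $m\mapsto\nu_m$ and $\nu\mapsto m$ are mutually inverse and composes with the state/finitely-additive-measure bijection underlying Theorem~\ref{measures} to conclude.

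The main obstacle is a locality property that both the additivity computation above and the mutual-inverse check secretly require: that the number $\nu(\ul{S})(B)$ depends only on the component $\ul{S}_B$, and not on how $\ul{S}$ behaves at unrelated stages. This is needed because $\ul{\delta(p)}^*\vee\ul{\delta(q)}^*$ and $\ul{\delta(p\vee q)}^*$ agree only at the stage $B$ --- daseinisation is not a lattice homomorphism, so they differ as global subobjects --- and likewise $m(p)$ is read at the minimal stage $B_p$ whereas additivity is read at the eight-element stage $B$. I expect to establish locality directly from the valuation law, exploiting that the Stone spaces here have at most three points, so the relevant sublattices of subobjects are finite and the modularity identity can be iterated combinatorially, together with the order-reversal of the value functions. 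Once locality is in hand, $\nu$ determines a coherent family of finitely additive measures $\nu(\,\cdot\,)(B)$ on the finite Boolean algebras $B$, glued through daseinisation, and the entire correspondence is carried out at stages of height at most two --- which is precisely the assertion that the $\mc{V}(\mc{N})$ result descends to $\BSub(\mc{N})^*$.
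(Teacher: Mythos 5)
Your architecture is the same as the paper's, which transplants the proof of \cite{Measures}: the forward map sends a state $\rho$ to $\ul{S}\mapsto(\rho(\hat{P}_{\ul{S}_B}))_B$; the reverse map reads off $m(p)=\nu(\ul{\delta(p)}^*)(B_p)$ from a measure $\nu$, checks additivity at an $8$-element stage containing two orthogonal projections, and then passes from finitely additive measures on $\mc{P}(\mc{N})$ to states by the Gleason-type theorem that needs the absence of type $I_2$ summands. You have also correctly isolated the crux in the well-definedness (``locality'') issue. The genuine gap is that you defer exactly this step, and the route you propose for it --- deriving locality from the valuation law of Definition~\ref{defn:measure} by iterating modularity over the finite stages --- provably cannot succeed, because neither locality nor even $\nu(\ul{0})=0$ follows from those axioms. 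For the latter: if $\nu$ is any measure then so is $c\nu+(1-c)$ for $0<c<1$, and it sends $\ul{0}$ to $1-c$; in particular your claim that the valuation law ``read at a stage $B$ asserts exactly finite additivity on $B$'' is not right (modularity plus normalization at the top is strictly weaker than additivity), and this already blocks your stage-$B$ additivity computation. For the former: fix a state $\rho$ and a stage $B_0$, and set $\nu(\ul{S})(B)=\frac{1}{2}\bigl(\rho(\hat{P}_{\ul{S}_B})+\rho(\hat{P}_{\ul{S}_{B\cap B_0}})\bigr)$. This satisfies every condition of Definition~\ref{defn:measure}: each $\nu(\ul{S})$ is order-reversing, since $B'\subseteq B$ gives $\hat{P}_{\ul{S}_{B'}}\geq\hat{P}_{\ul{S}_B}$ and $B'\cap B_0\subseteq B\cap B_0$; and the valuation law holds, since meets and joins of subobjects are computed stage-wise while $\rho$ is modular on commuting projections. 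Yet it is not local: taking $q$ with $B\cap B_0=\{0,q,1-q,1\}$, the subobjects $\ul{\delta(q)}^*$ and $\ul{\delta(q)}^*\vee\ul{R}$, where $\ul{R}$ has component $\ul{\Sigma}^*_C$ at stages $C\subseteq B\cap B_0$ and is empty elsewhere, agree at $B$ but receive the values $\rho(q)$ and $\frac{1}{2}(\rho(q)+1)$ there.

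What makes the theorem true, and what the paper's proof actually relies on, is that in \cite{Measures} locality is in effect built into the definition of a measure rather than derived: the $V$-component of $\mu(\ul{S})$ is written there as $\mu(\ul{S}_V)$, a quantity depending only on the component $\ul{S}_V$ and vanishing when that component is empty. Definition~\ref{defn:measure} is a paraphrase that drops this, but the paper's proof, citing \cite{Measures}, uses it. Once locality is definitional, the remaining substance is not locality but stage-independence: $\nu(\ul{\delta(p)}^*)$ takes the same value at every stage containing $p$, so that $m(p)$ read at the minimal stage $B_p$ is the same number that the additivity argument sees at the $8$-element stage. This is the argument of \cite[pp.~8--9]{Measures} that the paper imports and observes survives restriction to $\BSub(\mc{N})^*$, and it is short: the join of $\ul{\delta(p)}^*$ and $\ul{\delta(1-p)}^*$ is $\ul{\Sigma}^*$ while their meet has empty component at every stage containing $p$, so locality and the valuation law give $\nu(\ul{\delta(p)}^*)+\nu(\ul{\delta(1-p)}^*)=1$ on all such stages; both functions are order-reversing and all such stages lie above $B_p$, forcing both to be constant there. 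If you build locality (and vanishing on empty components) into your definition and prove stage-independence this way, the rest of your outline goes through verbatim and coincides with the paper's proof.
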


\begin{proof}
By Definition~\ref{measures}, a measure $\mu$ on the presheaf $P_{cl}\ul{\Sigma}$ is a mapping $\mu:P_{cl}\ul{\Sigma}\to\Gamma\ul{[0,1]^\succeq}$ with $\mu(\ul{\Sigma})=1$ and $\mu(\ul{S_1}\vee\ul{S_2})+\mu(\ul{S_1}\wedge\ul{S_2})=\mu(\ul{S_1})+\mu(\ul{S_2})$. For a state $\rho$ of $\mc{N}$, define \cite[(14)]{Measures}

\[\mu_\rho:P_{cl}\ul{\Sigma}\to\Gamma\ul{[0,1]^\succeq}\qquad\mbox{ sending }\qquad\ul{S}\mapsto ( \rho(\hat{P}_{\ul{S}_V}))_{V\in\mc{V}(\mc{N})}\]
\vspace{-1ex}

\noindent Here $\ul{S}$ is a clopen subset of $\ul{\Sigma}$. So for each $V\in\mathcal{V}(\mathcal{N})$ we have $\ul{S}_V$ is a clopen subset of the spectrum of $V$, and therefore corresponds to a projection denoted $\hat{P}_{\ul{S}_V}$ of $V$. Then $\rho(\ul{S}_V)$ is a real number in $[0,1]$, and for $V'\subseteq V$ we have $\rho(\ul{S}_V)\leq\rho(\ul{S}_{V'})$. Thus $(\rho(\hat{P}_{\ul{S}_V}))_{V\in\mc{V}(\mc{N})}$ is an order-inverting function from $\mathcal{V}(\mc{N})$ to the real interval, so \cite[p.~6]{Measures} can be regarded as a global section of $\ul{[0,1]^\succeq}$. Restricting to the setting of $\BSub(\mc{N})^*$ preserves all this. So for each state $\rho$ of $\mc{N}$ we have a mapping $\mu_\rho^*$ from the clopen subsets $P_{cl}\ul{\Sigma}^*$ of the spectral presheaf $\ul{\Sigma}^*$ over $\BSub(\mc{N})^*$ (which by Remark~\ref{clopen=power} is the power object of $\ul{\Sigma}^*$) taking $\ul{S}$ to 
$(\rho(\hat{P}_{\ul{S}_{B''}}))_{B\in\BSub(\mc{N})^*}$.

To see that this map is onto, we follow \cite[p.~8]{Measures}. Suppose $\mu^*$ is a measure on the subobjects of $\ul{\Sigma}^*$ (all of which are clopen). Let $\hat{P}$ be any projection of $\mc{N}$ and choose a subobject $\ul{S}$ of $\ul{\Sigma}^*$ so that there is $B\in\BSub(\mc{N})^*$ containing $\hat{P}$ with $\ul{S}_B$ being the subset of the spectrum of $B$ corresponding to $\hat{P}$. This $\ul{S}$ could be chosen to be the subobject with $\ul{S}_B$ the subset corresponding to the outer daseinisation $\hat{P}^o_B$ for each $B\in\BSub(\mc{N})^*$ \cite[(10)]{Measures} since this subobject has the desired property at the 4-element subalgebra $B_{\hat{P}}=\{0,\hat{P},1-\hat{P},1\}$ that belongs to $\BSub(\mc{N})^*$. Then, as in \cite[(32)]{Measures} set 

\[ m^*(\hat{P}) = \mu^*(\ul{S})(B)=\mu^*(\ul{S}_B)\]
\vspace{-1ex}

To show $m^*$ is well defined, it must be shown that if $\ul{S},\ul{\tilde{S}}$ are two subobjects and $B,\tilde{B}$ two elements of $\BSub(\mc{N})^*$ with $\ul{S}_B$ and $\ul{\tilde{S}}_{\tilde{B}}$ corresponding to $\hat{P}$, then $\mu^*(\ul{S})(B)=\mu^*(\ul{\tilde{S}})(\tilde{B})$. In the $\mathcal{V}(\mathcal{N})$ setting, this argument is given in \cite[p.~8-9]{Measures}. This argument applies in the current situation as well. 

With $m^*(\hat{P})$ defined for each projection $\hat{P}$, we next show that it defines a finitely additive measure $m^*:\mathcal{P}(\mathcal{N})\to[0,1]$ on the projection lattice of $\mathcal{N}$. Again, the argument duplicates that of \cite[p.~9]{Measures} with the key point to show finite additivity. If $\hat{P}$ and $\hat{Q}$ are orthogonal projections, then following \cite[p.~9]{Measures} there is a context $V$ containing $\hat{P}$ and $\hat{Q}$. But $\hat{P}$ and $\hat{Q}$ being orthogonal implies that they generate an at most 8-element Boolean subalgebra $B$ of projections. So $B\in\BSub(\mathcal{N})^*$. The argument of \cite[p.~9]{Measures} then provides that $m^*(\hat{P}\vee\hat{Q})=m^*(\hat{P})+m^*(\hat{Q})$. 

Then, following \cite[p.~9]{Measures}, the measure $m^*$ on the projections of $\mathcal{N}$ defined from the measure $\mu^*$ on $P\ul{\Sigma}$ can be extended to a state $\rho_{\mu^*}$ on $\mathcal{N}$ provided there are no type $I_2$ summands. The mappings $\rho\mapsto\mu_\rho^*$ and $\mu_*\mapsto\rho_{\mu^*}$ are inverse to one another as in \cite{Measures}.
\end{proof}

\begin{remark}
We have followed the proof in \cite{Measures} closely to illustrate that not only does the result remain valid in the restricted setting, but the proof also remains valid in the restricted setting. One needs only 4-element Boolean subalgebras $B$ to find a subobject given by a daseinised projection whose value at the context $B$ is the subset of the state space of $B$ corresponding to this projection, and then $8$-element Boolean subalgebras suffice to give finite additivity. 
\end{remark}

A further result \cite{Measures} gives a point of departure between the $\mathcal{V}(\mathcal{N})$ setting and the restricted setting of $\BSub(\mathcal{N})^*$. Here, a family of clopen subobjects $\ul{S}_n$ is called {\em locally disjoint} if there is a context $V$ where the subsets $(\ul{S}_n)_V$ of the spectrum of $V$ are disjoint. A measure $\mu$ on the clopen subsets is called {\em locally $\sigma$-additive} if for each locally disjoint family of clopen subobjects $\ul{S}_n$ we have $\mu(\bigvee \ul{S}_n)(V)=\sum\mu(\ul{S}_n)(V)$. The following is given in \cite[Cor.~IV.2]{Measures}.

\begin{theorem}
The normal states of a von Neumann algebra $\mc{N}$ without type $I_2$ summand correspond to the locally $\sigma$-additive measures on the clopen subobjects. 
\end{theorem}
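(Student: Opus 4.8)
The plan is to upgrade the bijection $\rho \mapsto \mu_\rho$ of Theorem~\ref{measures} to the level of normal states: since that theorem already furnishes a bijection between \emph{all} states of $\mc{N}$ and \emph{all} measures on the clopen subobjects, it suffices to verify that this single bijection carries the subclass of normal states onto the subclass of locally $\sigma$-additive measures, in both directions. Thus no new correspondence needs to be built; one only has to match the two distinguished subclasses on either side, using the explicit formula $\mu_\rho(\ul{S})(V) = \rho(\hat{P}_{\ul{S}_V})$ recalled in the proof of Theorem~\ref{measures*}.

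The conceptual engine is a local characterization of normality: a state $\rho$ on $\mc{N}$ is normal if and only if it is completely additive on projections, i.e. $\rho(\bigvee_i \hat{P}_i) = \sum_i \rho(\hat{P}_i)$ for every pairwise orthogonal family $\{\hat{P}_i\}$. The key observation is that any such orthogonal family commutes, hence generates an abelian subalgebra contained in some context $V \in \mc{V}(\mc{N})$; inside $\mc{P}(V)$ the family is orthogonal, and under the Stone duality of (\ref{spectral}) this corresponds to a disjoint family of clopen subsets of $\ul{\Sigma}_V$. So complete additivity of $\rho$ is, component by component, exactly additivity of $\rho|_V$ on disjoint clopen families — which is the shape of the local $\sigma$-additivity condition.

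For the forward direction, suppose $\rho$ is normal and let $\{\ul{S}_n\}$ be locally disjoint, disjoint at the context $V$. The sets $(\ul{S}_n)_V$ correspond to pairwise orthogonal projections $\hat{P}_n \in \mc{P}(V)$, the join $\bigvee_n \ul{S}_n$ evaluates at $V$ to $\bigvee_n \hat{P}_n$ (joins of subpresheaves being computed contextwise), and complete additivity of $\rho$ gives $\mu_\rho(\bigvee_n \ul{S}_n)(V) = \sum_n \mu_\rho(\ul{S}_n)(V)$, so $\mu_\rho$ is locally $\sigma$-additive. Conversely, if $\mu_\rho$ is locally $\sigma$-additive, then any orthogonal family of projections is placed inside a context $V$ and realized there as a locally disjoint family of clopen subobjects (as in the construction of $\ul{S}$ in the proof of Theorem~\ref{measures*}); local $\sigma$-additivity then reads off additivity of $\rho$ on that family, recovering complete additivity and hence normality.

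I expect the main obstacle to lie in the reverse direction, and specifically in the local characterization of normality: one must bridge the gap between the \emph{countable} $\sigma$-additivity built into the definition of a locally $\sigma$-additive measure and the \emph{arbitrary} orthogonal families demanded by complete additivity. This is precisely where the structure theory of von Neumann algebras and the exclusion of type $I_2$ summands (already in force from Theorem~\ref{measures}) enter, following \cite[Cor.~IV.2]{Measures}; the delicate point is to show that $\sigma$-additivity of $\rho|_V$ within each context $V$ forces $\rho$ to be normal on all of $\mc{N}$, rather than merely countably additive. The remaining ingredients — that clopen joins are computed contextwise via suprema of projections, and that the bijection of Theorem~\ref{measures} intertwines these operations — are routine given the Stone-duality dictionary already established.
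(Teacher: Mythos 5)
You should first be aware that the paper contains no proof of this statement: it is quoted directly from D\"oring \cite[Cor.~IV.2]{Measures}, and the point the paper goes on to make is that, unlike Theorems~\ref{ShortKochen-Specker}, \ref{SpecThm*} and~\ref{measures*}, this is the one result that does \emph{not} survive restriction to $\BSub(\mc{N})^*$, since on finite spectra local $\sigma$-additivity collapses to the finite additivity already in Definition~\ref{defn:measure}. So your proposal can only be compared with the cited source, and your overall strategy --- build nothing new, restrict the bijection $\rho\mapsto\mu_\rho$ of Theorem~\ref{measures}, and match the two subclasses via the characterization of normal states as those completely additive on projections --- is indeed the strategy of \cite{Measures}. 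Your forward direction is essentially complete, with one imprecision: countable joins of \emph{clopen} subobjects are not unions of subpresheaves computed contextwise, because an infinite union of clopen subsets of $\ul{\Sigma}_V$ need not be closed; the join in $\mathrm{Sub}_{cl}(\ul{\Sigma})$ is the contextwise closure, equivalently the clopen set of the projection $\bigvee_n\hat{P}_n$. Your conclusion $\mu_\rho(\bigvee_n\ul{S}_n)(V)=\rho\bigl(\bigvee_n\hat{P}_n\bigr)$ is correct, but for that reason, not the one you give.

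The genuine gap is the one you flag yourself, and the resolution you point to is misattributed, so the gap cannot be closed as you propose. The exclusion of type $I_2$ summands enters only through the generalized Gleason theorem, i.e.\ in extending a finitely additive measure on $\mc{P}(\mc{N})$ to a state; that step is already consumed by Theorem~\ref{measures} and gives no purchase whatsoever on passing from countable to arbitrary orthogonal families. What actually closes the reverse direction is one of two things. Either ``locally $\sigma$-additive'' is read for locally disjoint families of \emph{arbitrary} cardinality; then, since every orthogonal family of projections lies in a single abelian context, the condition on $\mu_\rho$ is verbatim complete additivity of $\rho$ on $\mc{P}(\mc{N})$, and normality follows from the classical theorem that a state of a von Neumann algebra is normal if and only if it is completely additive on projections (Kadison--Ringrose, Vol.~II, Thm.~7.1.12) --- no cardinality bridge is needed because there is no cardinality restriction. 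Or the families are genuinely countable, as the index $n$ and the prefix $\sigma$ in the paper's formulation suggest; then the implication ``$\sigma$-additive in every context $\Rightarrow$ normal'' holds when $\mc{N}$ is $\sigma$-finite (every orthogonal family of nonzero projections is then countable), for instance when $\mc{N}$ acts on a separable Hilbert space, but it is not a theorem of ZFC for general $\mc{N}$: if $|I|$ is a real-valued measurable cardinal, then $\ell^\infty(I)$ (abelian, hence without type $I_2$ summand) carries a state given by a countably additive measure on the power set of $I$ vanishing on singletons, which is $\sigma$-additive in every context yet not normal. So the ``delicate point'' you defer is not resolved by the hypotheses you invoke; a correct proof must either strengthen the reading of the additivity condition to arbitrary families or add a $\sigma$-finiteness/separability assumption.
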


This result is not preserved upon restriction to the $\BSub(\mc{N})^*$ setting because the spectrum of a finite Boolean algebra is finite, so the local $\sigma$-additivity condition reduces the usual finite additivity of a measure of Definition~\ref{defn:measure}. However, the nature of this local $\sigma$-additivity condition is different than other conditions such as those in Definition~\ref{defn:measure}, and throughout the rest of the topos program, since it refers to the existence of a context $V$ at which some property holds, not to a property held at all contexts. If desired, it would likely be possible to capture normal measures in the context of $\BSub(\mc{N})^*$ via a different type of unorthodox condition based on the fact \cite[Prop.~2.10]{BertJohn} that joins of projections can be captured from the poset structure $\BSub(\mc{N})^*$. It is unclear how either of these approaches to normal measures relates to the program of formulating quantum theory within a topos. 

\section{Automorphisms}\label{sec:automorphisms}

There is a considerable amount of literature on the topics of automorphisms and group representations in the topos approach. We will not review it all, but merely discuss some basics, referring to the literature for background \cite{Doring2}. 

\begin{definition} \label{lpp} \cite[p19]{Doring2}
  For $\mc{N}$ a von Neumann algebra, an automorphism of the spectral presheaf $\ul{\Sigma}$ is a pair $(\Gamma,\tau)$ of is an essential geometric automorphism $\Gamma$ of the topos of presheaves on $\mc{V}(\mc{N})$ with inverse image functor $\Gamma^*$, and a natural isomorphism $\tau\colon\Gamma^*(\ul{\Sigma})\to\ul{\Sigma}$ whose every component $\tau_V\colon\Gamma^*(\ul{\Sigma})_V\to\ul{\Sigma}_V$ is a homeomorphism. Hence an automorphism acts as 
  \[
    \ul{\Sigma}\stackrel{\Gamma^*}{\longrightarrow} \Gamma^*(\ul{\Sigma})\stackrel{\tau}{\longrightarrow}\ul{\Sigma}\text.
  \]
  Write the group of automorphisms of $\ul{\Sigma}$ as $\Aut(\ul{\Sigma})$. 
\end{definition}

Because a poset category is Cauchy complete, each essential geometric morphism $\Gamma$ arises from an order-automorphism $\gamma$ of $\mc{V}(\mc{N})$, with the action of the inverse image functor $\Gamma^*$ on the spectral presheaf given by $\Gamma^*(\ul{\Sigma})_V=\ul{\Sigma}_{\gamma(V)}$~\cite[p7]{Doring2}. Furthermore, since $\ul{\Sigma}_V$ is the spectrum of $V$, and this determines $V$ on the nose (and not just up to isomorphism), there is a bijection between such geometric morphisms $\Gamma$ and order isomorphisms $\gamma$ of $\mc{V}(\mc{N})$. Suppose $\Gamma$ is induced by $\gamma$. For each $V\in\mc{V}(\mc{N})$, the restriction $\Gamma|V\colon V\to\gamma(V)$ is a von Neumann isomorphism, so its Gelfand transform $\mc{G}(\Gamma|V)\colon\ul{\Sigma}_{\gamma(V)}\to\ul{\Sigma}_V$ is a homeomorphism between their spectra. This gives a natural isomorphism we call $\tau_\gamma\colon\Gamma^*(\ul{\Sigma})\to\ul{\Sigma}$, and the pair $(\Gamma,\tau_\gamma)$ is an automorphism of $\ul{\Sigma}$~\cite[p8]{Doring2}. Most importantly, the following result (with different proof) can be extracted~\cite[Corollary~5.15, Proposition~5.19]{Doring2}. 

\begin{theorem} \label{lppp}
  Let $\mc{N}$ be a von Neumann algebra without type $I_2$ summand, and let $\gamma$ be an order-automorphism of $\mc{V}(\mc{N})$ with associated essential geometric morphism $\Gamma$. The natural isomorphism $\tau_\gamma$ is the unique one for which $(\Gamma,\tau_\gamma)$ is an automorphism of $\ul{\Sigma}$.
\end{theorem}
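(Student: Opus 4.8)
The plan is to fix $\gamma$ and its essential geometric morphism $\Gamma$, observe that $\tau_\gamma$ is one admissible natural isomorphism, and prove uniqueness by trivializing the ``difference'' of any two admissible choices. Concretely, if $\tau\colon\Gamma^*(\ul{\Sigma})\to\ul{\Sigma}$ is any natural isomorphism with homeomorphism components for which $(\Gamma,\tau)$ is an automorphism, I would set $\sigma=\tau\circ\tau_\gamma^{-1}$. Since $\tau_\gamma^{-1}\colon\ul{\Sigma}\to\Gamma^*(\ul{\Sigma})$ and $\tau\colon\Gamma^*(\ul{\Sigma})\to\ul{\Sigma}$, this $\sigma$ is a natural automorphism of $\ul{\Sigma}$ lying over the identity functor on $\mc{V}(\mc{N})$, and each component $\sigma_V\colon\ul{\Sigma}_V\to\ul{\Sigma}_V$ is a homeomorphism of the Stone spectrum of $V$. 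It then suffices to prove $\sigma$ is the identity, for then $\tau=\sigma\circ\tau_\gamma=\tau_\gamma$. So the entire problem reduces to showing that $\ul{\Sigma}$ admits no nontrivial ``vertical'' symmetry.

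Next I would localize to small contexts. Identify each $\lambda\in\ul{\Sigma}_V$ with the Boolean homomorphism $\lambda\colon\mc{P}(V)\to 2$ it restricts to. For a nontrivial projection $p\in\mc{P}(V)$, the two-element context $V_p=\{0,p,1-p,1\}''$ sits below $V$, its spectrum $\ul{\Sigma}_{V_p}$ has exactly two points, and the homeomorphism $\sigma_{V_p}$ is either the identity or the transposition; record this by $\epsilon(p)\in\{0,1\}$, noting $\epsilon(p)=\epsilon(1-p)$ and that $\epsilon(p)$ depends only on the subalgebra $V_p$, not on the ambient $V$. Naturality of $\sigma$ along $V_p\subseteq V$, evaluated at $p$, yields $\sigma_V(\lambda)(p)=\lambda(p)\oplus\epsilon(p)$ for every nontrivial $p\in\mc{P}(V)$, with $\oplus$ addition modulo $2$. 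Since a spectrum point is determined by its values on projections, this pins down $\sigma_V$ completely in terms of the single function $\epsilon$.

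I would then constrain $\epsilon$ by demanding that each $\sigma_V(\lambda)$ again be a Boolean homomorphism. The decisive case is this: whenever $V_p$ is not maximal in $\mc{V}(\mc{N})$, choose $W\supsetneq V_p$; then $\mc{P}(W)$ has at least three atoms, $\sigma_W$ is a permutation of the finitely many points of $\ul{\Sigma}_W$, and matching restrictions to the atom-contexts of $W$ forces $\sigma_W$ to fix every point, whence $\epsilon(p)=0$. (Concretely, in a three-atom context the assumption $\epsilon(p)=1$ would send two distinct points of $\ul{\Sigma}_W$ to the same point, contradicting injectivity.) Thus $\epsilon(p)=0$ for every projection whose context $V_p$ is non-maximal.

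Finally I would dispose of the maximal two-element contexts using the hypothesis. A context $V_q$ is maximal in $\mc{V}(\mc{N})$ exactly when both $q$ and $1-q$ are minimal projections of $\mc{N}$; a Peirce decomposition then forces the identity to be a sum of two minimal projections, so $\mc{N}\cong\mathbb{C}^2$ or $\mc{N}\cong M_2(\mathbb{C})$. The assumption that $\mc{N}$ has no type $I_2$ summand rules out the factor $M_2(\mathbb{C})$, and in the setting of primary interest, $\mc{N}=\mc{B}(\mc{H})$ with $\dim\mc{H}\geq 3$, minimal projections have rank one, so the identity is never a sum of two of them and no maximal two-element context exists at all. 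Hence $\epsilon\equiv 0$, giving $\sigma_V=\mathrm{id}$ for every $V$, so $\sigma=\mathrm{id}$ and $\tau=\tau_\gamma$. I expect the genuine obstacle to be precisely this last step, namely the von Neumann structure-theoretic translation of ``no type $I_2$ summand'' into ``no maximal two-point context,'' together with the care needed to verify that the naturality bookkeeping for $\epsilon$ is globally consistent; by contrast the categorical reduction to a vertical symmetry and the localization to two- and three-element contexts should be routine.
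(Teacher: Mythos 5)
Your proposal follows essentially the same route as the paper's proof: reduce to the case where the geometric morphism is the identity, so that the task becomes showing $\ul{\Sigma}$ admits no nontrivial ``vertical'' natural automorphism with homeomorphism components, and then rigidify using two-point contexts $V_p$ compared against three-point contexts above them. Your injectivity argument in a three-atom context is exactly the Stone dual of the paper's argument, which transfers $\sigma$ to a natural automorphism $\nu$ of the Boolean co-presheaf $B\mapsto B$ on $\BSub(\mc{N})$ and observes that an automorphism of an $8$-element Boolean algebra $C\supseteq B_p$ preserves atoms and therefore cannot exchange the atom $p$ with the coatom $1-p$. The reduction via $\sigma=\tau\circ\tau_\gamma^{-1}$ and the $\epsilon$-bookkeeping are sound.

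Two points need repair. First, in your third step the claim ``choose $W\supsetneq V_p$; then $\mc{P}(W)$ has at least three atoms'' is false as stated: an arbitrary $W$ above $V_p$ can be infinite-dimensional with atomless projection lattice (e.g.\ $W\cong L^\infty[0,1]$ with $p$ a characteristic function), so $\ul{\Sigma}_W$ need not be finite and $\sigma_W$ need not be a permutation of finitely many points. What you need, and what is true, is that such a $W$ can be \emph{chosen}: pick any projection $q\in\mc{P}(W)\setminus\{0,p,1-p,1\}$; then $p$ and $q$ generate a finite Boolean algebra with at least three atoms, whose double commutant is a finite-dimensional context above $V_p$ to which your argument applies. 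Second, your last step correctly isolates the crux but then dodges it: the Peirce decomposition shows a maximal two-point context exists precisely when $\mc{N}\cong\mathbb{C}^2$ or $\mc{N}\cong M_2(\mathbb{C})$, and the no-type-$I_2$ hypothesis only rules out the latter; retreating to ``the setting of primary interest'' $\mc{B}(\mc{H})$, $\dim\mc{H}\geq 3$, does not prove the theorem as stated. In fact $\mc{N}=\mathbb{C}^2$ is a genuine counterexample: transposing the two points of $\ul{\Sigma}_{\mathbb{C}^2}$ is natural, since the only restriction map goes to the one-point spectrum $\ul{\Sigma}_{\mathbb{C}1}$, so uniqueness fails. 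The paper's own proof has the same blind spot---its assertion that an $8$-element $C\supseteq B_p$ always exists likewise fails for $\mathbb{C}^2$---which is presumably why the paper's final corollary adds the hypothesis $\mc{N}\not\cong\mathbb{C}^2$. So your structure-theoretic analysis actually exposes a hypothesis missing from the statement; you should say so explicitly rather than wave it away.
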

\begin{proof}
  Since $\Aut(\ul{\Sigma})$ is a group \cite[Lemma~5.2]{Doring2} under $(\Gamma_1,\tau_1)\cdot(\Gamma_2,\tau_2)=(\Gamma_2\circ\Gamma_1,\tau_1\circ\tau_2)$, it suffices to show that for $\Gamma$ the identity essential geometric morphism there is exactly one natural isomorphism $\tau\colon\ul{\Sigma}\to\ul{\Sigma}$ each component of which is a homeomorphism. Clearly, the identical natural isomorphism is one such isomorphism, we must show it is the only one. 

  We use the duality between spectra of abelian von Neumann algebras and their Boolean algebras of projections. Consider the co-presheaf $I$ on $\BSub(\mc{V})$ defined for $B$ and $B'\subseteq B$ by 
  \[ 
    I(B)=B\qquad\mbox{and}\qquad I(i_{B',B})\mbox{ is the identical embedding of $B'$ into $B$}\text.
  \]
  It suffices to show that if $\nu\colon I\to I$ is a natural isomorphism with each component $\nu_B$ an automorphism of $B$, then $\nu$ is the identity. For this, let $p$ be a projection of $\mc{N}$, generating the Boolean algebra $B_p=\{0,p,1-p,1\}$. Since $\nu_{B_p}$ is an automorphism of $B_p$ it maps $p$ to either itself or $1-P$. Since $\mc{N}$ has no type $I_2$-factor, there is an 8-element Boolean algebra $C$ containing $B_p$. In $C$ one of $p$ and $1-p$ is an atom, and the other a coatom. So the automorphism $\nu_C$ cannot map $p$ to $1-p$, and since $\nu_C$ extends $\nu_{B_p}$ also $\nu_{B_p}$ must map $p$ to itself, and so equal the identity. By  naturality of $\nu$ thus $\nu_B(p)=p$ for any Boolean algebra $B$ that contains $p$. It follows that $\nu_B$ is the identity for each $B$. 
\end{proof}

\begin{corollary}
  If $\mc{N}$ has no type $I_2$ summands, then the group $\Aut(\ul{\Sigma})$ is isomorphic to the opposite of the group $\Aut(\mc{V}(\mc{N}))$ of order-automorphisms of $\mc{V}(\mc{N})$. 
\end{corollary}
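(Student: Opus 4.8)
The plan is to exhibit an explicit group anti-isomorphism $\Phi\colon\Aut(\mc{V}(\mc{N}))\to\Aut(\ul{\Sigma})$, which is the same data as a group isomorphism $\Aut(\mc{V}(\mc{N}))^{op}\to\Aut(\ul{\Sigma})$. I would define $\Phi$ by sending an order-automorphism $\gamma$ to the pair $(\Gamma,\tau_\gamma)$, where $\Gamma$ is the essential geometric morphism induced by $\gamma$ (using the bijection between order-automorphisms of $\mc{V}(\mc{N})$ and essential geometric morphisms recorded before Theorem~\ref{lppp}) and $\tau_\gamma$ is the canonical natural isomorphism assembled from the Gelfand transforms $\mc{G}(\Gamma|V)$. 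By construction $\Phi(\gamma)$ is an automorphism of $\ul{\Sigma}$, so $\Phi$ is well defined.

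First I would verify bijectivity. For injectivity I would use that $\gamma$ is recoverable from $\Gamma$, since $\ul{\Sigma}_V$ determines $V$ on the nose; hence distinct $\gamma$ give distinct $\Gamma$, and a fortiori distinct pairs. For surjectivity I would take an arbitrary $(\Gamma,\tau)\in\Aut(\ul{\Sigma})$: because the underlying poset category is Cauchy complete, $\Gamma$ arises from a unique order-automorphism $\gamma$, and then Theorem~\ref{lppp} forces $\tau=\tau_\gamma$, so $(\Gamma,\tau)=\Phi(\gamma)$. This is precisely the point at which the hypothesis that $\mc{N}$ has no type $I_2$ summand is used, since it is needed to invoke Theorem~\ref{lppp}.

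Next I would establish the composition law and the appearance of the opposite group. The key computation is that composition of geometric morphisms mirrors composition of the $\gamma$'s covariantly: since the inverse image functor of the morphism induced by $\gamma$ acts on any presheaf $P$ by $\Gamma^*(P)_V=P_{\gamma(V)}$, and $(-)^*$ is contravariant for composition of geometric morphisms, one computes $(\Gamma_2\circ\Gamma_1)^*(P)_V=\Gamma_1^*(\Gamma_2^*(P))_V=P_{(\gamma_2\circ\gamma_1)(V)}$, so $\Gamma_2\circ\Gamma_1$ is induced by $\gamma_2\circ\gamma_1$. Combined with the group law $(\Gamma_1,\tau_1)\cdot(\Gamma_2,\tau_2)=(\Gamma_2\circ\Gamma_1,\tau_1\circ\tau_2)$, the product $\Phi(\gamma_1)\cdot\Phi(\gamma_2)=(\Gamma_2\circ\Gamma_1,\tau_{\gamma_1}\circ\tau_{\gamma_2})$ is again an automorphism whose geometric part is induced by $\gamma_2\circ\gamma_1$; by the uniqueness in Theorem~\ref{lppp} its natural isomorphism must equal $\tau_{\gamma_2\circ\gamma_1}$. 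Hence $\Phi(\gamma_1)\cdot\Phi(\gamma_2)=\Phi(\gamma_2\circ\gamma_1)$, so $\Phi$ reverses the order of composition and is the required anti-isomorphism. Note that I never need to unpack $\tau_{\gamma_1}\circ\tau_{\gamma_2}$ explicitly; uniqueness pins it down for free.

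The main obstacle I anticipate is purely the bookkeeping of the order reversal. One must keep straight that composing geometric morphisms matches composing the $\gamma$'s covariantly, while the group operation on $\Aut(\ul{\Sigma})$ already reverses the geometric part, and that these two facts together yield an \emph{anti}-homomorphism rather than a homomorphism, accounting for the appearance of the opposite group. The genuinely substantive content, namely the uniqueness of $\tau_\gamma$, has already been handled in Theorem~\ref{lppp}; everything else is formal, provided uniqueness is invoked at exactly the two places above, in surjectivity and in identifying the $\tau$-component of a product.
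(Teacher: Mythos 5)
Your proposal is correct and takes essentially the same approach as the paper: the bijection is obtained from the discussion preceding Theorem~\ref{lppp} together with its uniqueness statement, and the contravariant compatibility with the group operations---which the paper dismisses as ``not difficult to see''---is precisely the computation you spell out. Your use of the uniqueness in Theorem~\ref{lppp} to identify the $\tau$-component of a product without unpacking $\tau_{\gamma_1}\circ\tau_{\gamma_2}$ is a clean way of filling in that omitted detail.
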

\begin{proof}
  The above discussion and theorem shows that each order-automorphism $\gamma$ of $\mc{V}(\mc{N})$ gives rise to a unique element $(\Gamma,\tau_\gamma)$ of $\Aut(\ul{\Sigma})$. That this bijective correspondence is contravariantly compatible with the group operations is not difficult to see. Note that the contravariance is a consequence of the arbitrary choice of multiplication of $\Aut(\ul{\Sigma})$ and would be covariance if the opposite choice had been made in \cite{Doring}. 
\end{proof}

Now shift attention from the poset $\mc{V}(\mc{N})$ to the poset $\mc{V}(\mc{N})^*$ of elements of height at most two in $\mc{V}(\mc{N})$. Definition~\ref{lpp} modifies to this setting obviously, giving rise to the automorphism group $\Aut(\ul{\Sigma}^*)$ of the spectral presheaf over $\mc{V}(\mc{N})^*$. 

\begin{corollary} \label{qwe}
  If $\mc{N}$ has no type $I_2$ summands, then the group $\Aut(\ul{\Sigma}^*)$ is isomorphic to the opposite of the group $\Aut(\mc{V}(\mc{N})^*)$ of order-automorphisms of $\mc{V}(\mc{N})^*$. 
\end{corollary}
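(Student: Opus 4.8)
The plan is to reproduce the proof of the preceding corollary, replacing $\mc{V}(\mc{N})$ by $\mc{V}(\mc{N})^*$ throughout and checking that each ingredient survives the passage to the shorter poset. As in that proof, two things are needed: a bijective correspondence $\gamma\leftrightarrow(\Gamma,\tau_\gamma)$ between order-automorphisms of $\mc{V}(\mc{N})^*$ and elements of $\Aut(\ul{\Sigma}^*)$, and the analog of Theorem~\ref{lppp}, namely that $\tau_\gamma$ is the unique natural isomorphism for which $(\Gamma,\tau_\gamma)$ is an automorphism.

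First I would note that the preliminary discussion preceding Theorem~\ref{lppp} uses only that $\mc{V}(\mc{N})$ is a poset. Since $\mc{V}(\mc{N})^*$ is again a poset, its associated category is Cauchy complete, so essential geometric automorphisms of the presheaf topos over $\mc{V}(\mc{N})^*$ correspond bijectively to order-automorphisms $\gamma$ of $\mc{V}(\mc{N})^*$, with $\Gamma^*(\ul{\Sigma}^*)_B=\ul{\Sigma}^*_{\gamma(B)}$. Here every object $V=B''$ is finite-dimensional, so $\ul{\Sigma}^*_B$ is a discrete space of at most three points and each prospective component $\tau_{\gamma,B}$ is merely a bijection; the restriction $\Gamma|V\colon V\to\gamma(V)$ is a finite-dimensional $*$-isomorphism whose Gelfand transform supplies the required homeomorphism. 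This yields a candidate $(\Gamma,\tau_\gamma)\in\Aut(\ul{\Sigma}^*)$.

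For uniqueness I would transcribe the proof of Theorem~\ref{lppp} essentially verbatim, observing that it never leaves $\BSub(\mc{N})^*$. Reducing to $\Gamma$ the identity and passing through the duality to the inclusion co-presheaf $I$ of that proof, now restricted to $\BSub(\mc{N})^*$, the argument invokes only the $4$-element algebra $B_p=\{0,p,1-p,1\}$ and an $8$-element algebra $C\supseteq B_p$, both of which lie in $\BSub(\mc{N})^*$. The hypothesis that $\mc{N}$ has no type $I_2$ summand is exactly what guarantees such a $C$ exists, and in $C$ one of $p,1-p$ is an atom and the other a coatom, so no component can swap them; naturality then forces each $\nu_B$ to be the identity, whence $\tau_\gamma$ is unique. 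As in the preceding corollary, $\gamma\mapsto(\Gamma,\tau_\gamma)$ is then readily seen to be contravariantly compatible with composition, giving $\Aut(\ul{\Sigma}^*)\cong\Aut(\mc{V}(\mc{N})^*)^{\mathrm{op}}$.

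The step I expect to be the main obstacle is the \emph{existence} of $\tau_\gamma$: that the finite-dimensional isomorphisms $\Gamma|V\colon V\to\gamma(V)$ can be chosen coherently, so that the $\tau_{\gamma,B}$ genuinely assemble into a natural transformation. The danger is that for a $4$-element $B$ the order structure of ${\down}B$ does not by itself single out an isomorphism $B\to\gamma(B)$, since the swap of $p$ and $1-p$ is invisible to ${\down}B$ — unlike the infinite-dimensional case, where the richer subalgebra lattice pins the isomorphism down. This is resolved by the same mechanism that drives uniqueness: on an enclosing $8$-element context the atom structure determines the isomorphism, and because every $4$-element $B$ embeds in such a context (no type $I_2$ summand), naturality propagates a single consistent choice downward. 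Once existence and uniqueness are secured, the contravariant group isomorphism is routine.
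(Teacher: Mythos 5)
Your main line---transcribing the discussion preceding Theorem~\ref{lppp} and the proof of Theorem~\ref{lppp}, and checking that every ingredient (Cauchy completeness of a poset category, the $4$-element algebra $B_p$, the enclosing $8$-element algebra $C$, the use of the no-$I_2$ hypothesis exactly at that point) lives inside $\BSub(\mc{N})^*$---is precisely the paper's proof, which consists of the single sentence that this discussion and proof remain intact. You also deserve credit for noticing that the \emph{existence} of $\tau_\gamma$ is the one step that does not transfer formally: in the full setting it rests on the fact that an order-automorphism of $\mc{V}(\mc{N})$ is induced by a Jordan automorphism of $\mc{N}$ (the result of \cite{Doring} underlying \cite{Doring2}), and nothing a priori gives the analogous statement for the much smaller poset $\mc{V}(\mc{N})^*$.

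However, your proposed repair of that step has a genuine gap. You define the isomorphism on a $4$-element context $B$ by passing to an enclosing $8$-element context $C$, where the atom structure pins it down, and then assert that ``naturality propagates a single consistent choice downward.'' Naturality of the family is exactly what is being constructed, so this assumes the consistency it needs to prove, and the hidden well-definedness problem is real: a projection can be an atom of one maximal context and a coatom of another. For instance, in $M_4(\mathbb{C})$ with diagonal rank-one projections $e_1,\dots,e_4$ and $p=e_1+e_2$, the projection $p$ is an atom of the context with atoms $\{p,e_3,e_4\}$ but a coatom of the context with atoms $\{e_1,e_2,e_3+e_4\}$; your prescription may select $\gamma$-images corresponding to opposite elements of $\gamma(B_p)$ in the two cases, and nothing in your sketch rules this out. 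Proving that all enclosing contexts yield the same answer is essentially the content of the theorem of \cite{HHLN} that automorphisms of the hypergraph $\BSub(\mc{N})^*$ are induced by automorphisms of the orthomodular lattice $\mc{P}(\mc{N})$; it does not follow from the local atom/coatom observation. The paper closes this hole by different means: Theorem~\ref{qwf} (proved via the Boolean-shadow machinery of \cite{HHLN}, independently of this corollary) shows that every automorphism of $\mc{V}(\mc{N})^*$ is the restriction of an automorphism of $\mc{V}(\mc{N})$, after which the full-poset construction of $\tau_\gamma$ applies and restricts to $\mc{V}(\mc{N})^*$. Your uniqueness argument stands as written; the existence argument should be replaced by an appeal to such an extension or induction theorem.
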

\begin{proof}
  The discussion preceding Theorem~\ref{lppp} and the proof of Theorem~\ref{lppp} remain intact.
\end{proof}

To complete this line of thought, we compare the automorphism groups of the two posets. 

\begin{theorem} \label{qwf}
  For $\mc{N}$ a von Neumann algebra, restriction gives an isomorphism between the automorphism groups $\Aut(\mc{V}(\mc{N}))$ and $\Aut(\mc{V}(\mc{N})^*)$. 
\end{theorem}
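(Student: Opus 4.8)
The plan is to analyze the restriction homomorphism $R\colon\Aut(\mc{V}(\mc{N}))\to\Aut(\mc{V}(\mc{N})^*)$ directly, showing it is a well-defined group homomorphism that is both injective and surjective. First I would check that $R$ is well defined. Height is an order-theoretic invariant, so any order-automorphism $\gamma$ of $\mc{V}(\mc{N})$ maps each height-$k$ element to a height-$k$ element; in particular it carries $\mc{V}(\mc{N})^*$ onto itself, and $\gamma|_{\mc{V}(\mc{N})^*}$ is an order-automorphism of $\mc{V}(\mc{N})^*$. Since restriction respects composition, $R$ is a group homomorphism.

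For injectivity, the key observation is that every $V\in\mc{V}(\mc{N})$ is the join, taken in $\mc{V}(\mc{N})$, of the atoms below it. Working with $\BSub(\mc{N})$ for convenience, the atoms below $V$ are exactly the four-element algebras $B_p=\{0,p,1-p,1\}$ for $p\in\mc{P}(V)$, and any abelian subalgebra containing all of these contains $\mc{P}(V)$, hence contains $V$ since $V$ is generated by its projections. Thus $V=\bigvee\{B_p:p\in\mc{P}(V)\}$. Because an order-isomorphism preserves all existing least upper bounds, $\gamma(V)=\bigvee\{\gamma(B_p):p\in\mc{P}(V)\}$, so $\gamma$ is completely determined by its action on atoms. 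As the atoms lie in $\mc{V}(\mc{N})^*$, two automorphisms agreeing after restriction agree everywhere, giving injectivity of $R$.

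For surjectivity, given $\gamma^*\in\Aut(\mc{V}(\mc{N})^*)$ I would define a candidate extension by the same formula, $\gamma(V)=\bigvee\{\gamma^*(B_p):p\in\mc{P}(V)\}$, and likewise use $(\gamma^*)^{-1}$ to define a candidate inverse $\delta$. The formula for $\gamma$ is forced by the injectivity argument, so the content lies entirely in showing it is legitimate: first, the join must exist, i.e.\ the image atoms $\gamma^*(B_p)$ must generate an abelian subalgebra, which happens exactly when the corresponding projections pairwise commute; and second, $\gamma$ and $\delta$ must be mutually inverse, which reduces to the claim that $\gamma^*$ carries the atoms below $V$ bijectively onto the atoms below $\gamma(V)$. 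Both points amount to the single assertion that $\gamma^*$ induces a map on atoms that preserves the commutativity relation and the orthogonal join operation on $\mc{P}(\mc{N})$. Granting this, $\gamma$ and $\delta$ are order-preserving, mutually inverse, and restrict to $\gamma^*$ and $(\gamma^*)^{-1}$, so $\gamma\in\Aut(\mc{V}(\mc{N}))$ with $R(\gamma)=\gamma^*$.

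The main obstacle is exactly this structure-preservation. It is pleasant that orthogonal joins of a special form are already visible in the poset: the three atoms below a height-two element $W$ generated by a pairwise orthogonal triple $p,q,r$ with $p+q+r=1$ are precisely $B_p,B_q,B_r$, and $\gamma^*$ permutes such triples among themselves, so $p\vee q=1-r$ is read off as the complement-partner determined by the third atom. However, two commuting projections in general position (with all four of $pq,pq',p'q,p'q'$ nonzero) have no common upper bound in $\mc{V}(\mc{N})^*$, since any such bound would contain the sixteen-element algebra they generate; their commutativity and their join are therefore not directly visible and must be recovered from chains of the visible orthogonal joins. This is precisely the reconstruction of the join operation on $\mc{P}(\mc{N})$ from the poset structure of $\BSub(\mc{N})^*$ established in \cite{BertJohn,HHLN} (cf.\ \cite[Prop.~2.10]{BertJohn}); invoking it gives that $\gamma^*$ preserves commutativity and joins, which completes surjectivity and hence the theorem.
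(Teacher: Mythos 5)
Your treatment of well-definedness and injectivity is exactly the paper's: automorphisms preserve height, and since every element of $\BSub(\mc{N})$ is the join of the atoms $B_p=\{0,p,1-p,1\}$ beneath it and order-isomorphisms preserve existing joins, agreement on $\BSub(\mc{N})^*$ forces agreement everywhere. It is in surjectivity that you diverge. The paper never reduces to an induced map on projections: it first extends $\phi$ to the height-three elements using \cite[Thm.~6.8]{HHLN} (every configuration in $\BSub(\mc{N})^*$ that looks like the poset of small subalgebras of a 16-element Boolean algebra actually is one), and then extends to all of $\BSub(\mc{N})$ via the tall/short orthodomain machinery of \cite{HHLN}: elements of $\BSub(\mc{N})$ correspond bijectively to Boolean shadows of the height-at-most-three poset $P$, and automorphisms of $P$ carry Boolean shadows to Boolean shadows. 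You instead define $\gamma(V)=\bigvee\{\gamma^*(B_p):p\in\mc{P}(V)\}$ outright and reduce everything to the assertion that $\gamma^*$ preserves commutativity and orthogonal joins of projections, citing \cite[Prop.~2.10]{BertJohn}.

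The gap is that this reduction, and all the evidence you marshal for it, is finitary, while surjectivity requires an infinitary fact. Lines give joins of orthogonal pairs, plane reconstruction gives commutativity and joins in general position, and chains of these give preservation of the finite Boolean operations. But for $\gamma$ and $\delta$ to be mutually inverse you need that, for an arbitrary and typically infinite $\mc{P}(V)$, the set $T$ of projections underlying the image atoms is again a \emph{complete} Boolean subalgebra of $\mc{P}(\mc{N})$, i.e.\ closed under arbitrary suprema as computed in $\mc{P}(\mc{N})$; otherwise $\gamma(V)$, being the least complete Boolean subalgebra containing $T$, has atoms that are not images of atoms below $V$, and $\delta(\gamma(V))$ can strictly contain $V$, breaking both the inverse property and order-reflection. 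Preservation of the finite operations only makes $T$ \emph{abstractly} complete (abstractly isomorphic to $\mc{P}(V)$), and an abstractly complete Boolean subalgebra need not have its suprema agree with the ambient ones --- the same distinction that makes normality a nontrivial condition on homomorphisms of von Neumann algebras. This infinitary coherence is precisely what the paper's Boolean-shadow bijection supplies. Your appeal to \cite[Prop.~2.10]{BertJohn} closes the gap only if that proposition captures \emph{arbitrary}, not merely finite, orthogonal joins poset-theoretically; your write-up neither states this stronger form nor uses it, so as written the surjectivity argument is incomplete whenever $\mc{N}$ is infinite-dimensional.
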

\begin{proof}
  We may work instead with the posets $\BSub(\mc{N})$ and $\BSub(\mc{N})^*$, which are are isomorphic to $\mc{V}(\mc{N})$ and $\mc{V}(\mc{N})^*$.
  If $\gamma$ is an automorphism of $\BSub(\mc{N})$, it must preserve the height of elements, so the restriction $\gamma|\BSub(\mc{N})^*$ is an automorphism of $\BSub(\mc{N})^*$. Restriction preserves composition. Suppose $\gamma$ and $\gamma'$ are automorphisms of $\BSub(\mc{N})$ whose restrictions to $\BSub(\mc{N})^*$ agree. 
  Then $\gamma$ and $\gamma'$ are equal, because each element of $\BSub(\mc{N})$ is the join of the atoms beneath it. It remains to show that each automorphism $\phi$ of $\BSub(\mc{N})^*$ arises as the restriction of an automorphism of $\BSub(\mc{N})$. 

  We make use of results of \cite{HHLN}. We will use $P$ for the poset of elements of $\BSub(\mc{N})$ of height at most three (this was called $\BSub(\mc{N})^*$ in \cite{HHLN}). Atoms of $P$ are called points, the elements of height two lines, the elements of height three planes, and the poset $P$ is the hypergraph $\mc{G}(\mc{P}(\mc{N}))$ associated to the orthomodular lattice $\mc{P}(\mc{N})$. Since $\mc{P}(\mc{N})$ is an orthomodular lattice, and hence an orthomodular poset, $P$ can be constructed from its elements $\BSub(\mc{N})^*$ of height at most two as follows~\cite[Thm.~6.8]{HHLN}. For each downset of $\BSub(\mc{N})$ that is isomorphic to the elements of height at most two in the lattice of subalgebras of a 16-element Boolean algebra, insert an element of height three above all the elements in this downset. In the language of \cite{HHLN}, every configuration that looks like a plane is a plane. It follows that the automorphism $\phi$ of $\BSub(\mc{N})^*$ extends to an automorphism $\hat{\phi}$ of $P$. 

  Recall the notions of a tall and short orthodomain~\cite[Definition~5.13]{HHLN}. Since each element of $P$ has height at most three, $P$ is a short orthodomain, and it follows from~\cite[Proposition~5.14]{HHLN} that $\BSub(\mc{N})$ is a tall orthodomain. For each $x$ in $\BSub(\mc{N})$, it follows from \cite[Definitions~5.1 and~5.9]{HHLN} that the set ${\downarrow}x\cap P$ is a Boolean shadow. By definition of a tall orthodomain, there is a bijective correspondence between elements $x$ of $\BSub(\mc{N})$ and Boolean shadows $S$ of $P$ where the element $x$ corresponds to the Boolean shadow ${\downarrow}x\cap P$ and the Boolean shadow $S\subseteq P$ corresponds to its join $\bigvee S$ in $\BSub(\mc{N})$. Since automorphisms of $P$ take Boolean shadows to Boolean shadows, $\hat{\phi}$ can be extended to an automorphism $\gamma$ of $\BSub(\mc{N})$ by setting $\gamma(\bigvee S)=\bigvee \gamma (S)$ for each Boolean shadow $S$. Clearly the restriction of $\gamma$ to $\BSub(\mc{N})^*$ is $\phi$. 
\end{proof}

Write $\Aut(\mc{P}(\mc{N}))$ for the automorphism group of the orthomodular lattice $\mc{P}(\mc{N})$, and $\Aut_{\mathrm{Jordan}}(\mc{N})$ for the group of automorphisms of $\mc{N}$ with its Jordan product, and $\Aut(\mc{N}_{\mathrm{part}})$ for the group of automorphisms of $\mc{N}$ with its partial structure~\cite[Definitions~4.3 and~4.5]{Doring2}. 

\begin{corollary}
  If $\mc{N}$ is not isomorphic to $\mathbb{C}^2$ and has no type $I_2$-summand, then all of the groups $\Aut(\mc{V}(\mc{N}))$, $\Aut(\mc{V}(\mc{N})^*)$, $\Aut(\mc{P}(\mc{N}))$, $\Aut_{\mathrm{Jordan}}(\mc{N})$ and $\Aut(\mc{N}_{\mathrm{part}})$ are isomorphic, and they are contravariantly isomorphic to the groups $\Aut(\ul{\Sigma})$ and $\Aut(\ul{\Sigma}^*)$. Furthermore, there is an embedding of the automorphism group $\Aut(\mc{N})$ of $\mc{N}$ into each of the first four groups, and a contravariant embedding of it into the latter two. 
\end{corollary}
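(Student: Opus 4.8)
The plan is to weld together a web of isomorphisms, most of whose links are already in place, and then to adjoin the three operator-algebraic groups by invoking the reconstruction results of \cite{HHLN} and \cite{Doring2}. The internal backbone is immediate: the corollary following Theorem~\ref{lppp} gives a contravariant isomorphism $\Aut(\ul{\Sigma})\cong\Aut(\mc{V}(\mc{N}))$, Corollary~\ref{qwe} gives $\Aut(\ul{\Sigma}^*)\cong\Aut(\mc{V}(\mc{N})^*)$ contravariantly, and Theorem~\ref{qwf} gives a covariant isomorphism $\Aut(\mc{V}(\mc{N}))\cong\Aut(\mc{V}(\mc{N})^*)$ by restriction. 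Composing these realizes the claimed contravariant identification of $\Aut(\ul{\Sigma})$ and $\Aut(\ul{\Sigma}^*)$ with the poset-automorphism groups, so the only real work is to splice $\Aut(\mc{P}(\mc{N}))$, $\Aut_{\mathrm{Jordan}}(\mc{N})$ and $\Aut(\mc{N}_{\mathrm{part}})$ onto this backbone.

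First I would identify $\Aut(\mc{V}(\mc{N}))$ with $\Aut(\mc{P}(\mc{N}))$. Working through the isomorphism $\mc{V}(\mc{N})\cong\BSub(\mc{N})$, an automorphism of the orthomodular lattice $\mc{P}(\mc{N})$ carries complete Boolean subalgebras to complete Boolean subalgebras and so induces an order-automorphism of $\BSub(\mc{N})$; conversely, the reconstruction of the hypergraph $\mc{G}(\mc{P}(\mc{N}))$ and its orthomodular structure from $\BSub(\mc{N})$ established in \cite{HHLN} (and already exploited in the proof of Theorem~\ref{qwf}) shows that every order-automorphism of $\BSub(\mc{N})$ lifts to a unique automorphism of $\mc{P}(\mc{N})$. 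Here the hypothesis that $\mc{N}$ is neither $\mathbb{C}^2$ nor carries a type $I_2$ summand is exactly what guarantees that each $4$-element Boolean subalgebra $\{0,p,1-p,1\}$ sits inside an $8$-element one, the mechanism that rigidifies the potential $p\leftrightarrow 1-p$ ambiguity in such a lift---precisely the point used in the proof of Theorem~\ref{lppp}, and precisely what fails for $\mathbb{C}^2$ (whose $\BSub$ is a two-element chain, while $\Aut(\mc{P}(\mathbb{C}^2))\cong\mathbb{Z}_2$).

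Next I would attach the remaining groups. By Dye's theorem in the form recorded in \cite{Doring2}, under the no-type-$I_2$ hypothesis the orthomodular-lattice automorphisms of $\mc{P}(\mc{N})$ are exactly the restrictions of Jordan $*$-automorphisms of $\mc{N}$, yielding $\Aut(\mc{P}(\mc{N}))\cong\Aut_{\mathrm{Jordan}}(\mc{N})$, and \cite[Definitions~4.3 and~4.5]{Doring2} identifies the latter with $\Aut(\mc{N}_{\mathrm{part}})$. Transporting along the backbone then gives the common isomorphism type of all five algebraic groups and their contravariant identification with $\Aut(\ul{\Sigma})$ and $\Aut(\ul{\Sigma}^*)$. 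For the final clause, a $*$-automorphism of $\mc{N}$ is in particular a Jordan automorphism, giving a homomorphism $\Aut(\mc{N})\to\Aut_{\mathrm{Jordan}}(\mc{N})$; it is injective because a $*$-automorphism fixing every projection fixes, by the spectral theorem and norm-continuity, every self-adjoint element and hence all of $\mc{N}$. Composing with the isomorphisms above produces the embedding into each of the first four groups, and dualizing gives the contravariant embedding into $\Aut(\ul{\Sigma})$ and $\Aut(\ul{\Sigma}^*)$.

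The main obstacle is the operator-algebraic bridge $\Aut(\mc{P}(\mc{N}))\cong\Aut_{\mathrm{Jordan}}(\mc{N})$: it is the one step that is not formal bookkeeping, it rests on Dye's theorem, and it is where the exclusion of type $I_2$ summands is indispensable, since the correspondence genuinely fails for $M_2(\mathbb{C})$, whose projection lattice admits orthomodular automorphisms not induced by Jordan maps. A secondary, more clerical, hazard is confirming that the separate pairwise isomorphisms are mutually compatible---for instance that the $\BSub$-to-$\mc{P}(\mc{N})$ identification intertwines the restriction isomorphism of Theorem~\ref{qwf}---so that one obtains a single coherent system of (contravariant) isomorphisms rather than an incoherent collection; this amounts to tracing a typical automorphism through the definitions and should present no genuine difficulty.
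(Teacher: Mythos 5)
Your proposal is correct and follows essentially the same route as the paper: the paper's entire proof is the single line ``Combine Corollary~\ref{qwe} and Theorem~\ref{qwf} with~\cite[Corollary~5.15]{Doring2}'', and that cited corollary of D\"oring is precisely the operator-algebraic chain you reconstruct by hand --- the identification $\Aut(\mc{V}(\mc{N}))\cong\Aut(\mc{P}(\mc{N}))$ via Boolean-subalgebra reconstruction, Dye's theorem giving $\Aut(\mc{P}(\mc{N}))\cong\Aut_{\mathrm{Jordan}}(\mc{N})\cong\Aut(\mc{N}_{\mathrm{part}})$, and the (non-surjective) embedding of $\Aut(\mc{N})$. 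The only difference is granularity: where you unpack Dye's theorem and the \cite{HHLN}-style reconstruction (and correctly locate where the $\mathbb{C}^2$ and type $I_2$ exclusions bite), the paper simply cites the packaged result.
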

\begin{proof}
  Combine Corollary~\ref{qwe} and Theorem~\ref{qwf} with~\cite[Corollary~5.15]{Doring2}. 
\end{proof}

In the process of considering further the use of automorphisms in the treatment of \emph{flows} in~\cite{DoringFlows}, we had trouble with the embedding of the automorphism group of the spectral presheaf into the automorphism group of the Heyting algebra of clopen subobjects of the spectral presheaf \cite[Proposition~4.3]{Doring}. The indicated formula, described also in equation (38) does give a homomorphism between the automorphism groups, but injectivity is not shown, and it seems that it maps each automorphism to the identity automorphism of the Heyting algebra of clopen subobjects. Of course, this would be the same also over the poset $\mc{V}(\mc{N})^*$, which is our point.


\section{Concluding remarks}\label{sec:conclusion}

We have shown that many of the core results in the topos approach over $\mc{V}(\mc{N})$ are retained upon restriction to a topos over the simpler poset $\mc{V}(\mc{N})^*$. Further, aspects of the topos approach simplify considerably in the $\mc{V}(\mc{N})^*$ setting. It is not clear what is gained in the more general setting that is not gained in the simpler one, or if the topos over the simpler object is sufficient. 

The toposes in the two settings are different. That the difference is not felt by restriction comes from the following fact. Many results of the topos approach involve embedding of physical notions such as projections into a certain collection of presheaves, and generally these embeddings are not onto. The role of the presheafs that do not come from their actual counterparts is not clear, and this applies in both settings. The two settings will have different supplies of these unfelt presheaves (perhaps the $\mc{V}(\mc{N})$ setting has more, we do not know), but the importance of this is not clear. 

The logic of the toposes in the two settings differs too. Recall that the Heyting algebra of downsets of a poset $P$ has height:
\begin{itemize}
  \item 0 precisely when it satisfies the formula $\varphi_0$ given by $((y\to x)\to y)\to y$;
  \item at most 1 precisely when it satisfies the formula $\varphi_1$ given by $((z\to\varphi_0)\to z)\to z$;
  \item at most 2 precisely when it satisfies the formula $\varphi_2$ given by $((w\to\varphi_1)\to w)\to w$.
\end{itemize} 
The poset $\mc{V}(\mc{N})$ is of infinite height if and only if $\mc{N}$ is infinite-dimensional, so its downsets will not satisfy any of these conditions. But $\mc{V}(\mc{N})^*$ always has height at most 2, so satisfies the third condition. This is reflected in the equational properties of the logic of the topos. The connection between these laws of intuitionistic logic that separate the toposes over $\mc{V}(\mc{N})$ and over $\mc{V}(\mc{N})^*$ and properties of the quantum mechanical system represented by $\mc{N}$ is not apparent. 

Finally, the topos approach of Heunen, Landsman and Spitters~\cite{HLS} more generally considers a C*-algebra $\mc{N}$ instead of a von Neumann algebra, and the poset $\mc{C}(\mc{N})$ of all commutative C*-subalgebras instead of $\mc{V}(\mc{N})$. In general $\mc{C}(\mc{N})$ and $\mc{V}(\mc{N})$ are very different, because a von Neumann algebra $\mc{N}$ can have more abelian C*-subalgebras than abelian von Neumann subalgebras. But $\mc{C}(\mc{N})^*$ equals $\mc{V}(\mc{N})^*$ because all elements of $\mc{C}(\mc{N})^*$ are finite-dimensional and hence von Neumann algebras themselves. Therefore one might consider the topos based on $\mc{C}(\mc{N})^*$ for arbitrary C*-algebras $\mc{N}$. However, it is hard to compare this to the topos based on $\mc{C}(\mc{N})$, because unlike for von Neumann algebras $\mc{N}$, a C*-algebras $\mc{N}$ may not have many projections, and all techniques based on Boolean algebras that we used cannot be applied.

In conclusion, consideration of the topos over $\mc{V}(\mc{N})^*$ rather than $\mc{V}(\mc{N})$ raises a number of specific questions about the topos approach that may lead to a better understanding of it. If it turns out that the topos over $\mc{V}(\mc{N})^*$ is sufficient, this may allow considerable simplification of aspects of the topos approach, and also open the door to the geometrical techniques to treat the posets $\mc{V}(\mc{N})^*$ and the morphisms between them as described in \cite{HHLN,BertJohn}.

\newpage

\end{document}